\crefname{section}{§}{§§}
\Crefname{section}{§}{§§}
\definecolor{red}{rgb}{0.8,0,0}
\definecolor{darkorange}{rgb}{1,0.4,0}
\definecolor{lightorange}{rgb}{1,0.6, 0}
\definecolor{yellow}{rgb}{1,0.8, 0}
\newtheorem{theorem}{Theorem}
\newtheorem{definition}{Definition}
\newtheorem{corollary}{Corollary}
\newtheorem{lemma}{Lemma}
\newcommand{\Reals}[1]{{\rm I\! R}^{#1}}
\newcommand\grad{\operatorname{grad}}
\renewcommand\div{\operatorname{div}}
\newcommand\curl{\operatorname{curl}}
\newcommand\bu{\bm{u}}
\newcommand\bv{\bm{v}}
\begin{document}
\title{Helicity-conservative Physics-informed Neural Network Model for Navier-Stokes Equations}

\author[1,2]{Jiwei Jia}
\ead{jiajiwei@jlu.edu.cn}
\author[3]{Young Ju Lee}
\ead{yjlee@txstate.edu}
\author[1]{Ziqian Li}
\ead{zqli21@mails.jlu.edu.cn}
\author[1]{Zheng Lu}
\ead{luzheng21@mails.jlu.edu.cn}
\author[1,2]{Ran Zhang\corref{cor1}}
\ead{zhangran@jlu.edu.cn}

\cortext[cor1]{Corresponding author}
\affiliation[1]{organization={School of Mathematics, Jilin University},
city={Changchun, Jilin},
postcode={130012},
country={China}}

\affiliation[2]{organization={National Applied Mathematical Center (Jilin)},
city={Changchun, Jilin},
postcode={130012},
country={China}}
  
\affiliation[3]{organization={Department of Mathematics, Texas State University},
city={San Marcos, TX},
postcode={78666},
country={USA}}

\begin{abstract}
We design the helicity-conservative physics-informed neural network model for the Navier-Stokes equation in the ideal case. The key is to provide an appropriate PDE model as loss function so that its neural network solutions produce helicity conservation. Physics-informed neural network model is based on the strong form of PDE. We compare the proposed Physics-informed neural network model and a relevant helicity-conservative finite element method. We arrive at the conclusion that the strong form PDE is better suited for conservation issues. We also present  theoretical justifications for helicity conservation as well as supporting numerical calculations. 
\end{abstract}

\begin{keyword}
	Helicity-conservative,
	Navier-Stokes Equation,
	Deep learning,
	Finite element method
\end{keyword}

\maketitle
\section{Introduction}\label{sec:intro} 

Numerical modeling and simulation for the incompressible Navier-Stokes system is critical in a number of applications. Therefore, there have been a lot of efforts in designing numerical methods for solving the incompressible Navier-Stokes equations. It is well-known that the Navier-Stokes system has various conserved quantities and these quantities are important to preserve in numerical calculations to attain stability and meaningful physical solution. Especially, for stability, the energy law and the incompressibility have been proved to be crucial. A focus in our paper is on the helicity of a divergence-free vector field, or fluids helicity, which is conserved in non-dissipative systems, i.e., ideal flows. Helicity is a standard measure for the extent to which the field lines wrap and coil around one another \cite{cantarella1999influence}, thus, fluid helicity is known to be important in the turbulence regime as discussed in, for example, \cite{frisch1975possibility,perez2009role}. Helicity is also known to provide a local lower bound for the energy \cite{arnold1999topological}, i.e., a topological obstruction of energy relaxation. Readers can find more discussions about fluids helicity in the following references and references cited therein  \cite{arnold1999topological,berger1984topological,moffatt1981some,moffatt1992helicity,moffatt2014helicity}. In many numerical algorithms, such conservative quantities are generally, approximated up to a discretization error, rather than exactly conserved. Such approximation errors may pollute the solution behavior and thus it is of great interest to construct numerical methods that precisely preserve the conservative quantities. Due to the importance of preserving the fluids helicity, there have been some existing efforts to design helicity-preserving schemes in literatures. Liu and Wang \cite{liu2004energy} studied helicity-preserving finite difference methods for axisymmetric Navier-Stokes and MHD flows. Rebholz \cite{layton2008helicity,rebholz2007energy} constructed energy- and helicity-preserving finite element methods for the Navier-Stokes equations. Some further discussion on the helicity for studying turbulence model can be found at \cite{olshanskii2010note}. Kraus and Maj \cite{kraus2017variational} studied helicity-preserving schemes for the magnetic hydrodynamics system based on discrete exterior calculus. Recently, helicity preserving finite elements for MHD, which includes fluids helicity have been introduced in \cite{hu2021helicity}. In this paper, we consider to design helicity preserving neural network models for incompressible Navier-Stokes equation. To the best of our knowledge, it is the first attempt to do so. Our finding is that unlike the standard finite difference or finite element schemes, the helicity preserving scheme is more transparent within the physics informed neural network framework \cite{raissi2019physics} to preserve the helicity for the incompressible Navier-Stokes equation.

The goal of this paper is to construct the neural network model that can preserve the fluids helicity. Unlike the standard finite element methods based on the weak formulation of the PDE models, Physics-informed neural networks (PINN) model \cite{raissi2019physics} is based on the strong PDE and thus, conservation can be shown to be made much easier without having to introduce a number of auxiliary variables. This is really the case for conserving the incompressibility. Later \cite{li2020fourier}, \cite{long2018pde} made great progress on deep learning solving PDEs and \cite{hong2021priori} discussed the theory of it. \cite{krishnapriyan2021characterizing} proposed seq2seq strategy which is essential for timing problems. As is well-known the weak formulation seeks the pair of finite elements that is stable as well as is compatible such that the pressure space contains divergence of velocity so that the strong divergence is attained. However, such a construction is extremely difficult in general. Oftentimes, it relies on a discrete differential form point of view such as the finite element exterior calculus, see   \cite{gawlik2019variational,gawlik2020conservative} on structure-preserving discretization for the fluid mechanics with $H(\div)$-conforming velocity. Recently, helicity has been discussed for the magnetic hydrodynamics equation. The flow equation is similar to  similar discretization for the Navier-Stokes equations based on the N\'ed\'elec edge element can be found in \cite{girault1990curl}.  However, helicity-preservation was not addressed there. 

The rest of the paper is organized as follows. In Section \ref{sec:preliminary}, we provide preliminaries, notation and helicity-conservative finite element scheme. In Section \ref{sec:pinn}, we present a PINN-based algorithm that preserves the helicity. In Section \ref{sec:num}, we present numerical results on the convergence and helicity-preserving properties of our algorithms. In Section \ref{sec:con}, we give some concluding remarks.

%We note that $\bm{j} \times \bm{B}$ is called the Lorentz force, the force that the magnetic field acts on the conducting fluid. We note that from the Faraday's law, $\bm{B}$ is incompressible, i.e., $\nabla \cdot \bm{B} = 0$. 

%with the fluid momentum equation in the Lamb form \cite{lamb1932hydrodynamics}

%where $\bm{\omega} = \nabla \times \bu$ is the vorticity, we obtain the following system (with the abuse of notation):  

\section{Governing Equation and Conservative Quantities}\label{sec:preliminary} 

We consider the following system of equations in $\Omega \times (0, \mathbb{T}]$:
\begin{subeqnarray}\label{main:eq2} 
\partial_t \bu - \bu\times \bm{\omega} - R_{e}^{-1}\nabla \times \nabla \times \bu + \nabla\left (\frac{1}{2}|\bu|^{2} + \widetilde{p} \right ) &=& \bm{f},  \slabel{main:eq2m} \\ 
\bm{\omega} &=& \nabla \times \bm{u}, \slabel{main:eq3m} \\ 
\nabla \cdot {\bf{u}} &=& 0. \slabel{main:eq4m}  
\end{subeqnarray}
Here $\partial_t {\bf{u}} = \partial {\bf{u}} / \partial t$, ${\bf{u}}$ and $p$ are the fluid velocity and pressure, respectively. This formulation of the momentum equation \eqref{main:eq2m} is called the Lamb form \cite{lamb1932hydrodynamics}. 

We shall consider the following boundary conditions for \eqref{sec:intro}:  for all {$\bm{x}\in\partial \Omega$} and {$t > 0$},
\begin{eqnarray}\label{NS-bc}
\bu \times \bm{n} = \bm{0}, \quad p := \widetilde{p} + \frac{1}{2} |\bu|^2 = 0.
\end{eqnarray}
where $\bm{n}$ is the unit outer normal vector.
In fact, the conditions for $\bm{u}$ and $\widetilde{p}$ in \eqref{NS-bc} can be seen as a vorticity boundary condition since $\bm{u}\times \bm{n}=0$ implies $(\nabla\times \bm{u})\cdot \bm{n}=0$ on $\partial \Omega$. Girault \cite{girault1990curl} used similar conditions as \eqref{NS-bc} for the Navier-Stokes equations. As we shall see below, this is the natural boundary condition that appears in the helicity conservation.

%For the magnetic part of the MHD system we will use the common boundary condition 

The initial conditions for the fluid velocity, magnetic field are given for any {$\bm{x}\in\Omega$}
\begin{equation}
\bu(\bm{x},0) =  \bu_{0}(\bm{x}).
\end{equation}
We now review some conserved quantities of \eqref{NS-bc} below. First, we can present the energy conservation without the proof (see \cite{hu2021helicity} for example).  
\begin{theorem}
The diffusion or conservation of energy can be stated as follows.
The NS system \eqref{main:eq2} with the boundary condition \eqref{NS-bc} has the following energy identity: 
 \begin{eqnarray}\label{resistive:energylaw} 
{1 \over 2}\frac{d}{dt} \|\bu\|_0^2  + R_{e}^{-1} \|\nabla \times \bu\|_0^2 
= (\bm{f}, \bu). 
\end{eqnarray}
\end{theorem}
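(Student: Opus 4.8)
The plan is to test the momentum equation \eqref{main:eq2m} against $\bu$ in the $L^2(\Omega)$ inner product and then treat each of the four resulting terms separately. Writing $(\cdot,\cdot)$ for the $L^2$ pairing and setting $p=\widetilde p + \tfrac12|\bu|^2$, I would start from
\[
(\partial_t\bu,\bu) - (\bu\times\bm{\omega},\bu) - R_e^{-1}(\nabla\times\nabla\times\bu,\bu) + (\nabla p,\bu) = (\bm{f},\bu).
\]
The right-hand side already matches the claimed identity, and the time-derivative term immediately yields $\tfrac12\frac{d}{dt}\|\bu\|_0^2$. So the real work is to show that the nonlinear and pressure terms vanish, while the diffusion term produces $R_e^{-1}\|\nabla\times\bu\|_0^2$.

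First I would dispose of the Lamb (nonlinear) term pointwise: since $\bu\times\bm{\omega}$ is orthogonal to $\bu$ at every point, the integrand $(\bu\times\bm{\omega})\cdot\bu$ vanishes identically, so that $(\bu\times\bm{\omega},\bu)=0$ with no integration by parts needed. For the pressure term I would integrate by parts,
\[
(\nabla p,\bu) = -(p,\nabla\cdot\bu) + \int_{\partial\Omega} p\,(\bu\cdot\bm{n})\,dS,
\]
where the volume term is killed by the incompressibility constraint \eqref{main:eq4m} and the boundary term is killed by the condition $p=0$ on $\partial\Omega$ in \eqref{NS-bc}.

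The crux is the diffusion term, and this is where the boundary data genuinely pays off. Using the identity $\nabla\cdot(\bm{a}\times\bm{b}) = (\nabla\times\bm{a})\cdot\bm{b} - \bm{a}\cdot(\nabla\times\bm{b})$ with $\bm{a}=\nabla\times\bu$ and $\bm{b}=\bu$, the divergence theorem gives
\[
(\nabla\times\nabla\times\bu,\bu) = \|\nabla\times\bu\|_0^2 + \int_{\partial\Omega}\big((\nabla\times\bu)\times\bu\big)\cdot\bm{n}\,dS.
\]
I would then rewrite the boundary integrand via the scalar triple product as $(\nabla\times\bu)\cdot(\bu\times\bm{n})$, which vanishes because $\bu\times\bm{n}=\bm{0}$ on $\partial\Omega$. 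Substituting the three evaluated terms back into the tested equation yields exactly \eqref{resistive:energylaw}. The main obstacle I anticipate is purely a matter of sign and orientation bookkeeping in the vector-calculus boundary identity, so that the surviving surface term aligns precisely with the tangential trace $\bu\times\bm{n}$ rather than a normal trace; once that is pinned down, the identity follows by direct substitution, and notably the inviscid ($R_e^{-1}\to 0$, $\bm{f}=\bm{0}$) case recovers the genuine conservation of kinetic energy.
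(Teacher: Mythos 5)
The paper does not actually prove this theorem---it states the energy identity and defers to the literature---so there is no in-paper argument to compare against; your proposal supplies the standard proof, and it is essentially correct. Testing \eqref{main:eq2m} with $\bu$, killing the Lamb term pointwise, killing the pressure term via \eqref{main:eq4m} and $p=0$ on $\partial\Omega$, and reducing the curl-curl term to $\|\nabla\times\bu\|_0^2$ through the identity $\nabla\cdot(\bm{a}\times\bm{b})=(\nabla\times\bm{a})\cdot\bm{b}-\bm{a}\cdot(\nabla\times\bm{b})$ plus the triple-product rewriting $((\nabla\times\bu)\times\bu)\cdot\bm{n}=(\nabla\times\bu)\cdot(\bu\times\bm{n})$ is exactly the right mechanism, and your observation that the tangential trace condition (not a normal one) is what kills the surface term is the correct bookkeeping. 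The one point you should make explicit rather than pass over: with the momentum equation \emph{as printed}, the viscous term carries a minus sign, $-R_e^{-1}\nabla\times\nabla\times\bu$, so your substitution literally yields $\tfrac12\tfrac{d}{dt}\|\bu\|_0^2-R_e^{-1}\|\nabla\times\bu\|_0^2=(\bm{f},\bu)$, which contradicts \eqref{resistive:energylaw}. This is a sign typo in \eqref{main:eq2m} (the dissipative form is $+R_e^{-1}\nabla\times\nabla\times\bu$ on the left, consistent with the discrete scheme \eqref{main:eq3d}, which uses $+R_e^{-1}(\nabla\times\bu,\nabla\times\bv)$), but a careful proof should either correct the sign or note the discrepancy instead of silently absorbing it.
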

We note that the energy law \eqref{resistive:energylaw} is important since it induces the energy norm stability: 
\begin{eqnarray*}
\max_{0\leq t\leq T} 
    \|\bu\|_0^2 
    + R_{e}^{-1} \int_{0}^{T} \|\nabla \times 
    \bu\|_0^2\,\mathrm{d}\tau 
    \leq  ~ \|\bu_{0}\|_0^{2}  + 
    {R_{e}} \int_0^T\|\bm{f}\|_{-1}^{2}\,\mathrm{d}\tau. 
\end{eqnarray*}
Secondly, the main interest of this paper is the fluids helicity. We begin by defining the helicity
\begin{definition}\label{hel} 
For any divergence-free field $\bm{\xi}$ in $\Omega$, the helicity of $\bm{\xi}$ is defined as 
\begin{equation} 
\mathcal{H}_{\bm{\xi}}:=\int_{\Omega} \bm{\xi}\cdot \bm{\eta}\, dx, 
\end{equation} 
where $\bm{\eta}$ is any potential of $\bm{\xi}$ satisfying $\nabla\times \bm{\eta}=\bm{\xi}$. By an integration by parts, $\mathcal{H}_{\bm{\xi}}$ does not depend on the choice of the potential $\bm{\eta}$ if $\bm{\xi}\cdot\bm{n}=0$ on the boundary $\partial \Omega$, c.f., \cite{arnold1999topological}. 
\end{definition} 
We remark that the fluids helicity describes the linking and knots of the field $\bm{\xi}$. In fluid mechanics, one defines the fluid helicity, denoted by $\mathcal{H}_f$ as follows: 
\begin{equation}
\mathcal{H}_{f} := \int_{\Omega} \bm{u} \cdot \nabla \times \bm{u}\, \mathrm{d}x = \int_{\Omega} \bm{u} \cdot \bm{\omega}\, \mathrm{d}x. 
\end{equation} 
We state the precise form of helicity conservation as follows. 
\begin{theorem}
For the Euler equation, the following identity holds: 
\begin{equation} \label{id:Hm}
\frac{\mathrm{d}}{\mathrm{d}t} \mathcal{H}_f =  2 R_{e}^{-1} \int_{\Omega}\nabla \times \nabla \times \bm{u} \cdot \bm{\omega} \, {d}x + 2 \int_\Omega \bm{f} \cdot \bm{\omega} \, dx. 
\end{equation} 
\end{theorem}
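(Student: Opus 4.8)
The plan is to differentiate $\mathcal{H}_f=\int_\Omega \bu\cdot\bm{\omega}\,dx$ directly in time and reduce everything to a single bulk integral of $\partial_t\bu$ tested against the vorticity, into which the momentum equation can then be inserted. First I would bring the time derivative inside the integral and use $\bm{\omega}=\nabla\times\bu$ to write
\[
\frac{d}{dt}\mathcal{H}_f=\int_\Omega \partial_t\bu\cdot(\nabla\times\bu)\,dx+\int_\Omega \bu\cdot(\nabla\times\partial_t\bu)\,dx.
\]
The key first step is to symmetrize the two terms. I would apply the identity $\nabla\cdot(\bm{a}\times\bm{b})=(\nabla\times\bm{a})\cdot\bm{b}-\bm{a}\cdot(\nabla\times\bm{b})$ with $\bm{a}=\bu$, $\bm{b}=\partial_t\bu$ together with the divergence theorem, which converts the second integral into $\int_\Omega(\nabla\times\bu)\cdot\partial_t\bu\,dx$ plus a boundary term $-\int_{\partial\Omega}(\bu\times\partial_t\bu)\cdot\bm{n}\,dS$. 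Rewriting its integrand as $-\partial_t\bu\cdot(\bu\times\bm{n})$ and invoking the tangential boundary condition $\bu\times\bm{n}=\bm{0}$ from \eqref{NS-bc} shows this boundary term vanishes, leaving $\frac{d}{dt}\mathcal{H}_f=2\int_\Omega\partial_t\bu\cdot\bm{\omega}\,dx$.

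Next I would substitute $\partial_t\bu$ from the Lamb form \eqref{main:eq2m}, namely $\partial_t\bu=\bm{f}+\bu\times\bm{\omega}+R_e^{-1}\nabla\times\nabla\times\bu-\nabla p$ with $p=\tfrac12|\bu|^2+\widetilde{p}$, and evaluate the resulting contributions of $\int_\Omega\partial_t\bu\cdot\bm{\omega}\,dx$ one by one. The convective term drops out pointwise since $(\bu\times\bm{\omega})\cdot\bm{\omega}=0$ by orthogonality. For the pressure term I would integrate by parts to obtain $-\int_\Omega p\,(\nabla\cdot\bm{\omega})\,dx+\int_{\partial\Omega}p\,(\bm{\omega}\cdot\bm{n})\,dS$; the first integral vanishes because $\nabla\cdot\bm{\omega}=\nabla\cdot(\nabla\times\bu)=0$ identically, and the boundary integral vanishes because the total pressure satisfies $p=0$ on $\partial\Omega$ by \eqref{NS-bc}. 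What survives are exactly the diffusion and forcing terms, and multiplying by the factor $2$ yields \eqref{id:Hm}.

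I expect the only delicate point to be the bookkeeping of the two boundary integrals and the verification that the two pieces of \eqref{NS-bc} — the tangential condition $\bu\times\bm{n}=\bm{0}$ and the total-pressure condition $p=0$ — are precisely what is needed to annihilate them; this is also where the remark after Definition~\ref{hel} becomes relevant, since $\bm{\omega}\cdot\bm{n}=0$ on $\partial\Omega$ (which follows from $\bu\times\bm{n}=\bm{0}$) is exactly what makes $\mathcal{H}_f$ well defined in the first place. The curl integration-by-parts used to symmetrize the time-derivative term is the other step to carry out carefully, as it is what produces the clean factor of $2$; everything else reduces to pointwise vector algebra and the identity $\nabla\cdot(\nabla\times\,\cdot\,)=0$. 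I would also note in passing that, although the statement refers to the Euler equation, the identity \eqref{id:Hm} is established for the full system, exact helicity conservation being recovered in the ideal, force-free regime where $R_e^{-1}=0$ and $\bm{f}=\bm{0}$.
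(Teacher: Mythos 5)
Your proposal is correct and follows essentially the same route as the paper's proof: differentiate $\mathcal{H}_f$, use the curl integration-by-parts identity with the tangential boundary condition $\bu\times\bm{n}=\bm{0}$ to obtain the factor of $2$, substitute the Lamb-form momentum equation, and eliminate the convective term by pointwise orthogonality and the pressure term via $\nabla\cdot\bm{\omega}=0$ together with $p=0$ on $\partial\Omega$. Your write-up simply makes explicit the boundary-term bookkeeping that the paper leaves implicit.
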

\begin{proof}
We notice that by the integration by parts together with the boundary condition for $\bm{u}$, 
\begin{eqnarray*}
\frac{d}{dt} \int_\Omega \bm{u} \cdot \bm{\omega} \, dx &=&   \int_\Omega \partial_t \bm{u} \cdot \bm{\omega} \, dx +  \int_\Omega\bm{u} \cdot \nabla \times  \partial_t \bm{u} \, dx =  2 \int_\Omega \partial_t \bm{u} \cdot \bm{\omega} \, dx \\ 
&=& 2 \int_\Omega \left ( \bu \times \bm{\omega} + R_{e}^{-1}\nabla \times \nabla \times \bu - \nabla\left (\frac{1}{2}|\bu|^{2} + \widetilde{p} \right ) + \bm{f} \right ) \cdot \bm{\omega} \, dx \\ 
&=& 2 \int_\Omega \left ( R_{e}^{-1}\nabla \times \nabla \times \bu + \bm{f} \right ) \cdot \bm{\omega} \, dx \\ 
\end{eqnarray*}
This completes the proof. 
\end{proof}
This theorem tells that for the ideal case when $R_e = \infty$ and $\bm{f} = \bm{0}$, we obtain the helicity conservation. On the other hand, the natural pollution term for the helicity conservation is the effect of diffusion. Therefore, throughout the discussion that follows, we shall assume $\bm{f} = \bm{0}$. 

\subsection{Finite element preserving the fluids helicity} 
\label{Finite element}

In this section, we shall consider Helicity-conservative finite element methods defined on contractible domains.  One can extend the definition of helicity to nontrivial topology and different space dimensions  \cite[Chapter 3]{arnold1999topological}. We use the standard notation for the inner product and the norm of the
{$L^{2}$} space
$$
(u,v):=\int_{\Omega}u\cdot v \mathrm{d}x,\quad
\|u\|_{0}:=\left(\int_{\Omega} \lvert u\rvert^2 \mathrm{d}x\right)^{1/2}.
$$
%Similarly, we use $\|u\|_{s}$ to denote the $H^{s}$ norm of $u$.
% When there is no possible confusion, we also use $\|\cdot\|_{s}$ instead of $\|\cdot\|_{s, \Omega}$.
 Define the following $H(D,\Omega)$ space with a given linear operator {$D$}:
$$
H(D,\Omega):=\{v\in L^2(\Omega), Dv\in L^2(\Omega)\},  
$$
and
$$
H_0(D,\Omega):=\{v\in H(D, \Omega), t_{D}v=0 \mbox{ on } \partial\Omega\},
$$
where $t_{D}$ is the trace operator:
$$
t_{D}v:=
\left\{
  \begin{array}{cc}
    v, & D=\mathrm{grad},\\
    v\times n, & D=\mathrm{curl},\\
    v\cdot n, & D=\mathrm{div}.
  \end{array}
\right.
$$
We also define:
$$
L^2_0(\Omega):=\left\{v\in L^2(\Omega):  \int_\Omega v=0 \right\}.
$$
By definition, $H_0(\mathrm{grad}, \Omega)$ coincides with $H^1_0(\Omega)$.
%$$
%H^1(\Omega):=H(\mathrm{grad}, \Omega), \quad
%H^1_0(\Omega):=H_0(\mathrm{grad}, \Omega).
%$$

%There are two de~Rham sequences with the above type of Sobolev regularity in 2D:
%  \begin{equation}\label{sequence1}
%\begin{CD}
%0@>>>\mathbb{R}@>>>H(\curl, \Omega) @>\curl>> {H}(\mathrm{div}, \Omega) @>\mathrm{div} >> L^{2}(\Omega)  @ > >>  0,
%\end{CD}
%\end{equation}
%and
% \begin{equation}\label{sequence2}
%\begin{CD}
%0@>>>\mathbb{R}@>>>H({\grad}, \Omega) @>{\grad}>> {H}(\mathrm{rot}, \Omega) @>\mathrm{rot} >> L^{2}(\Omega)  @ > >>  %0.
%\end{CD}
%\end{equation}
%Here, the 2D curl operator maps a scalar function to a vector valued function, defined by 
%\begin{align}\label{2Dcurl}
%\curl u:=(-\partial_{2}u, \partial_{1}u)^{T}.
%\end{align}
%One can obtain \eqref{sequence2} by rotating the spaces in \eqref{sequence1} by $\pi/2$. 

The de~Rham complex in three space dimensions with vanishing boundary conditions reads:
 \begin{equation}\label{sequence3-0}
\begin{tikzcd}
0 \arrow{r}  \arrow{r} &H_{0}({\grad}, \Omega)  \arrow{r}{\grad}& {H}_{0}(\curl, \Omega)\arrow{r}{\curl} &{H}_{0}(\mathrm{div}, \Omega)\arrow{r}{\div}& L_{0}^{2}(\Omega) \arrow{r}{} & 0.
 \end{tikzcd}
\end{equation}
The sequence \eqref{sequence3-0} is exact on contractible domains, meaning that  $\mathcal{N}(\curl)=\mathcal{R}(\grad)$ and $\mathcal{N}(\div)=\mathcal{R}(\curl)$, where $\mathcal{N}$ and $\mathcal{R}$ denote the kernel and range of an operator, repectively. %The spaces without explicit boundary conditions also lead to an exact sequence on contractible domains, but they will not be used in this paper. 

A main idea of the discrete differential forms \cite{arnold2018finite,Arnold.D;Falk.R;Winther.R.2006a} or the finite element exterior calculus \cite{Bossavit.A.1998a,Hiptmair.R.2002a} is to construct finite elements for the spaces in \eqref{sequence3-0} such that they fit into a discrete sequence
 \begin{equation}\label{sequence3-0-h}
{\begin{tikzcd}
0 \arrow{r}  &H_{0}^{h}({\grad}, \Omega)  \arrow{r}{\grad}& {H}_{0}^{h}(\curl, \Omega)\arrow{r}{\curl} &{H}^{h}_{0}(\mathrm{div}, \Omega)\arrow{r}{\div}& L_{0}^{2, h}(\Omega) \arrow{r}{} & 0.
 \end{tikzcd}} 
\end{equation}

% Existence of bounded cochain projections would implies stability and convergence of mixed finite element methods \cite{Arnold.D;Falk.R;Winther.R.2006a}. 

The discrete de~Rham sequences can be  of arbitrary
order \cite{Boffi.D;Brezzi.F;Fortin.M.2013a,
  Arnold.D;Falk.R;Winther.R.2006a}.
 Figure~\ref{deRham} shows the finite elements of the lowest order (Whitney forms).
\begin{figure}[ht!]
\begin{center}
\includegraphics[width=.8in]{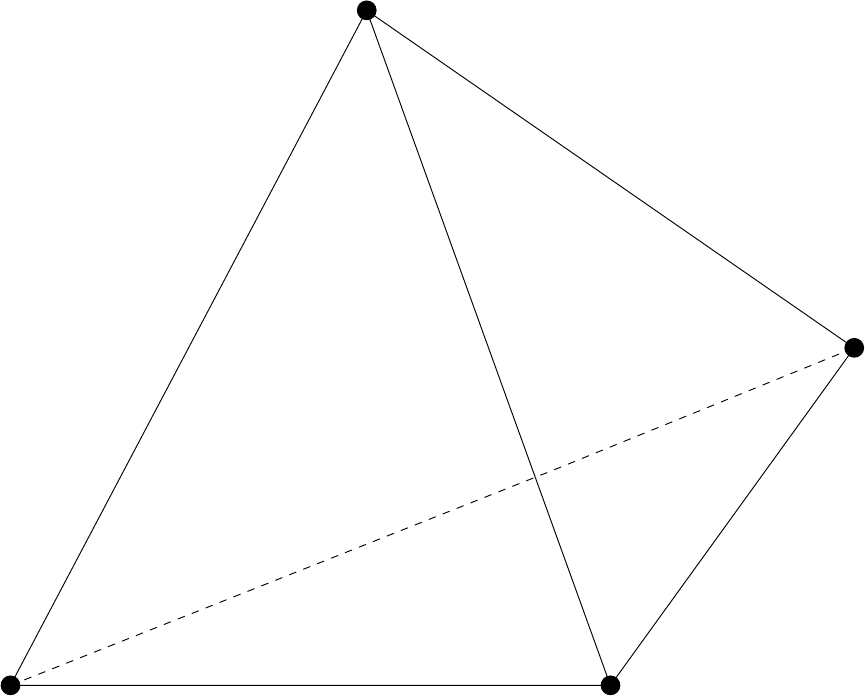}
\includegraphics[width=.8in]{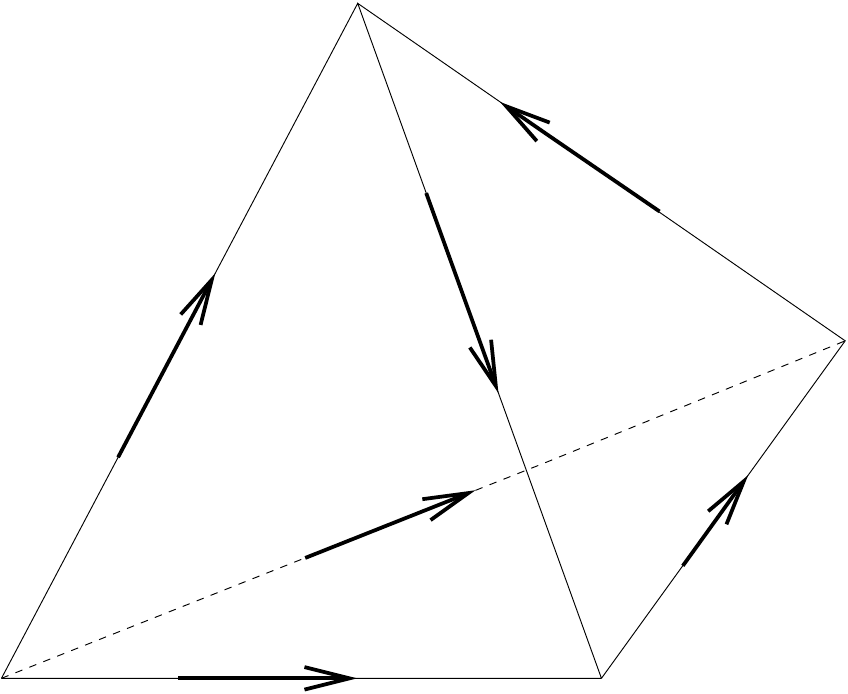}
\includegraphics[width=.8in]{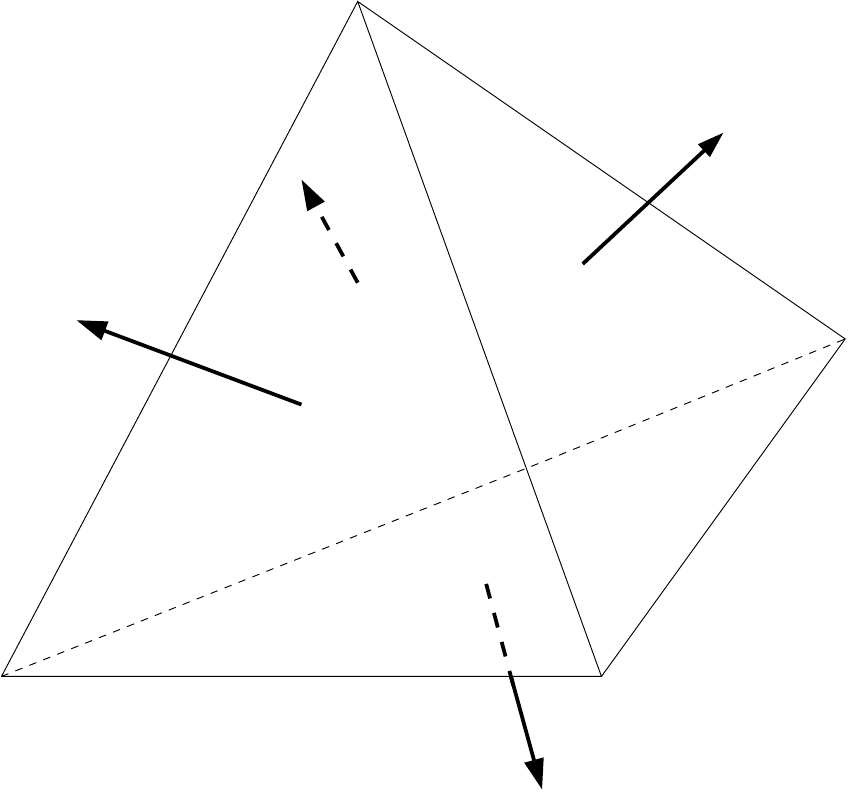}
\includegraphics[width=.8in]{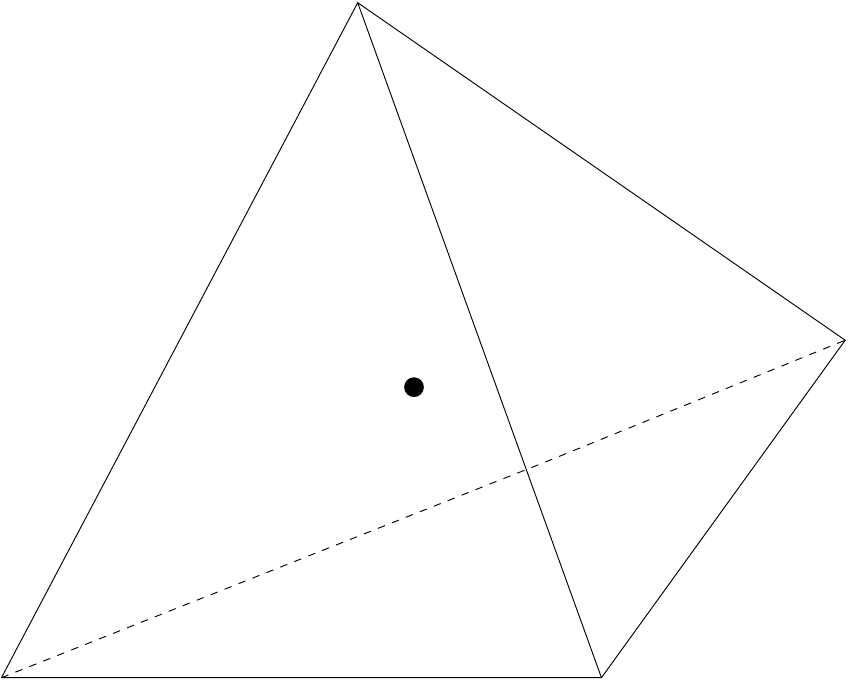}
\caption{DOFs of the finite element de~Rham sequence of lowest order}
\label{deRham}
\end{center}
\end{figure}
%To obtain compatible finite element schemes, below we require that the discrete spaces 
 %$H_{0}^{h}(\mathrm{curl}, \Omega)$, $H_{0}^{h}(\mathrm{div}, \Omega)$ and $L^{2, h}_{0}(\Omega)$ belong to the same finite element de~Rham sequence. 

%Let 
%\begin{equation} 
%\bm{X} := \left [H^{1}_{0}(\Omega) \right ]^{3}\times H_{0}({{\curl}}, \Omega)\times H_{0}({{\curl}}, \Omega)\times H_{0}(\div, \Omega), \, \mbox{ and } \,  Q := L^{2}_{0}(\Omega). 
%\end{equation} 
%The discrete spaces for $\bm{X}$ and $Q$ will be denoted by $\bm{X}_h$ and $Q_h$, respectively and they are given as follows: 
%\begin{equation} 
%\bm{X}_{h} = \bm{V}_{h} \times H_{0}^{h}({{\curl}}, \Omega)\times H_{0}^{h}({{\curl}}, \Omega)\times H_{0}^{h}(\div, \Omega), %\, \mbox{ and } \, Q_h \subset L_0^{2}(\Omega). 
%\end{equation} 

%we assume that the velocity and pressure spaces $\bm{V}_h \times Q_h$ are chosen to satisfy the inf-sup condition, i.e., they make a stable Stokes pair. Our discussion starts with introducing semi-discrete scheme, i.e., continuous in time and discrete in space. \textcolor{red}{For simplicity, we set $\sigma_r = \mu_r = 1$.} 
Throughout this section, we shall use $\mathbb{Q}_h^{\rm curl}$, the $L^2$ projection onto $H_0^h({\curl},\Omega )$: 
\begin{equation}
\mathbb{Q}^{\rm curl}_h: \left [ L^{2}(\Omega)\right ]^{3}\mapsto H^{h}_{0}({{\curl}},\Omega) 
\end{equation} 
and the operator $\nabla_h \times : H^h_0({\rm div}, \Omega) \mapsto H_0^h({{\curl}}, \Omega)$ defined by the following relation: 
\begin{equation}
(\nabla_h \times \bm{U}, \bm{V}) = (\bm{U}, \nabla \times \bm{V}) \quad \forall (\bm{U},\bm{V}) \in H_0^h({\rm div},\Omega) \times H_0^h({{\curl}},\Omega).  
\end{equation} 

We are now in a position to present the helicity-preserving full discretization for the Navier-Stokes system. Generally, for helicity preserving scheme, mid-point method in time discretization is used \cite{layton2008helicity,rebholz2007energy}. We shall define
\begin{equation}
\bm{X}_{h} =  H^{h}_{0}({{\curl}},\Omega) \times H^{h}_{0}(\curl,\Omega) \times H_0^h(\rm grad).
\end{equation} 
 The scheme can be written as follows: find $(\bu, \bm{\omega},  p) \in \bm{X}_{h}$, 
such that for all $(\bv, \bm{\mu},  q)  \in \bm{X}_{h}$:
\begin{subeqnarray}\label{main:eq3d}  \slabel{eqn1}
\left( D_t \bu, \bv \right) - (\bu \times \bm{\omega}, \bv) + R_e^{-1}(\nabla\times \bu, \nabla\times \bv) + (\nabla p, \bv)  &=& (\bm{0}, \bv), \qquad  \\\slabel{eqn4}
(\bm{\omega}, \bm{\mu}) - (\nabla\times\bu,  \bm{\mu})&=& 0, \\ \slabel{eqn7}
(\bu ,\nabla q) &=& 0, 
\end{subeqnarray} 
where 
\begin{equation} 
D_t \bu := \frac{\bu^{n+1} - \bu^n}{\Delta t},  
\end{equation} 
and all the unknowns without time derivative, i.e, $\bu,  \bm{\omega}, p$  are defined at the midpoint in time interval $[t_n,t_{n+1}]$, e.g.,  
\begin{equation}
\bu = \frac{\bu^{n+1} + \bu^n}{2}.  
\end{equation}  
We note that this weak form can be equivalently formulated as to find $\bm{u}, \bm{\omega}, p \in \bm{X}_h$ such that  
\begin{equation}\label{discretemethodfull} 
D_t \bu - {\mathbb{Q}_h^{\curl}} ( \bu \times {\mathbb{Q}_h^{\curl}} [\nabla \times \bu] ) + \nabla p  + R_{e}^{-1}{\nabla}_{h}\times\nabla\times \bm{u} = \bm{0}.  
\end{equation}
We shall now list some of properties and the helicity conservation holds in a very limited case. We first introduce a simple but important lemma, whose proof can be found at \cite{hu2021helicity}: 
\begin{lemma} 
For any given $\bm{\xi} \in \Reals{3}$, we have 
\begin{eqnarray}
(D_t \bm{\xi}, \bm{\xi}) = \frac{1}{2\Delta t} \left ( \|\bm{\xi}^{n+1}\|^2 - \|\bm{\xi}^n\|^2 \right ). 
\end{eqnarray}
Furthermore, for any pair of vectors $(\bm{\xi}, \bm{\nu}) \in \mathbb{R}^{3} \times \mathbb{R}^{3}$, we have the following identity: 
\begin{equation} 
D_t \int \bm{\xi} \cdot \bm{\nu} = \int D_t \bm{\xi} \cdot \bm{\nu} + \int \bm{\xi} \cdot D_t \bm{\nu},  
\end{equation}  
where 
\begin{equation}
D_t \int \bm{\xi} \cdot \bm{\nu} := \frac{1}{\Delta t} \left (\int \bm{\xi}^{n+1} \cdot \bm{\nu}^{n+1} -  \int \bm{\xi}^{n} \cdot \bm{\nu}^{n} \right ).
\end{equation} 
\end{lemma}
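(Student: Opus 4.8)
The plan is to reduce both identities to elementary pointwise algebra, exploiting the bilinearity of the $L^2$ inner product so that every manipulation can be carried out under the integral sign. The only inputs are the definition of the difference quotient $D_t\bm{\xi} = (\bm{\xi}^{n+1}-\bm{\xi}^n)/\Delta t$ and the convention that an unaccented symbol denotes the temporal midpoint, i.e. $\bm{\xi} = (\bm{\xi}^{n+1}+\bm{\xi}^n)/2$ and likewise $\bm{\nu} = (\bm{\nu}^{n+1}+\bm{\nu}^n)/2$. No boundary conditions, no integration by parts, and no properties of the finite element spaces are needed; the statement is a purely algebraic bookkeeping device.

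For the first identity I would substitute these two definitions directly into the inner product, obtaining
\[
(D_t\bm{\xi},\bm{\xi}) = \frac{1}{2\Delta t}\int_\Omega (\bm{\xi}^{n+1}-\bm{\xi}^n)\cdot(\bm{\xi}^{n+1}+\bm{\xi}^n)\,\mathrm{d}x.
\]
The integrand then collapses through the difference-of-squares factorization $(\bm{a}-\bm{b})\cdot(\bm{a}+\bm{b}) = |\bm{a}|^2-|\bm{b}|^2$, applied pointwise with $\bm{a}=\bm{\xi}^{n+1}$ and $\bm{b}=\bm{\xi}^n$; integrating the result gives exactly $\frac{1}{2\Delta t}(\|\bm{\xi}^{n+1}\|^2-\|\bm{\xi}^n\|^2)$.

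For the discrete Leibniz rule I would again insert the midpoint and difference-quotient definitions into the right-hand side and expand the two dot products
\[
(\bm{\xi}^{n+1}-\bm{\xi}^n)\cdot(\bm{\nu}^{n+1}+\bm{\nu}^n) + (\bm{\xi}^{n+1}+\bm{\xi}^n)\cdot(\bm{\nu}^{n+1}-\bm{\nu}^n).
\]
The structural point, and the only thing one must watch, is that the four mixed terms $\pm\,\bm{\xi}^{n+1}\cdot\bm{\nu}^n$ and $\pm\,\bm{\xi}^n\cdot\bm{\nu}^{n+1}$ cancel in pairs, leaving $2(\bm{\xi}^{n+1}\cdot\bm{\nu}^{n+1}-\bm{\xi}^n\cdot\bm{\nu}^n)$. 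After multiplying by the common prefactor $1/(2\Delta t)$ and integrating over $\Omega$, this reproduces precisely $D_t\int\bm{\xi}\cdot\bm{\nu}$ as defined. I note in passing that the first identity is the special case $\bm{\nu}=\bm{\xi}$ of this rule, so one could alternatively prove the Leibniz identity first and recover the energy identity as an immediate corollary.

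There is no genuine obstacle here: the lemma simply transports the continuous facts $\frac{d}{dt}|\bm{\xi}|^2 = 2\,\partial_t\bm{\xi}\cdot\bm{\xi}$ and the product rule to the midpoint-in-time discretization, and the entire content is the cross-term cancellation noted above. The only places that require care are keeping the factor $1/2$ coming from the midpoint average consistent with the factor $1/\Delta t$ coming from the difference quotient, and using bilinearity correctly so that the pointwise identities integrate term by term. This lemma is precisely the tool that will later let one test the discrete scheme \eqref{main:eq3d} against $\bm{\omega}$ and convert the time-difference of the discrete helicity into the form appearing on the right-hand side of \eqref{id:Hm}.
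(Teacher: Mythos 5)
Your proof is correct: both identities follow exactly as you describe from substituting the midpoint convention $\bm{\xi}=(\bm{\xi}^{n+1}+\bm{\xi}^n)/2$ and the difference quotient into the bilinear form, with the difference-of-squares factorization giving the first identity and the cancellation of the cross terms giving the discrete Leibniz rule (of which, as you note, the first identity is the special case $\bm{\nu}=\bm{\xi}$). The paper itself does not spell out a proof --- it defers to the cited reference \cite{hu2021helicity} --- and your argument is precisely the standard elementary computation that reference (and any reader) would supply, so there is nothing to add.
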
 
%\begin{proof} 
%We shall only consider the last equation. By definition, we have 
%\begin{eqnarray}
%\int D_t \bm{\xi} \cdot \bm{\nu}  &=& \frac{1}{2\Delta t}  \int (\bm{\xi}^{n+1} - \bm{\xi}^n) \cdot (\bm{\nu}^{n+1} + \bm{\nu}^n)  \\
%\int \bm{\xi} \cdot D_t \bm{\nu}  &=&  \frac{1}{2\Delta t}  \int (\bm{\xi}^{n+1} + \bm{\xi}^n) \cdot (\bm{\nu}^{n+1} - \bm{\nu}^n)  \end{eqnarray} 
%Therefore, after the cancellation, we have 
%\begin{eqnarray}
%\int D_t \bm{\xi} \cdot \bm{\nu} + \int \bm{\xi} \cdot D_t \bm{\nu} = \frac{1}%{\Delta t} \left ( \int \bm{\xi}^{n+1} \cdot \bm{\nu}^{n+1} - \int \bm{\xi}^n \cdot \bm{\nu}^n \right ). 
%\end{eqnarray} 
%This completes the proof. 
%\end{proof} 
We now show the energy law for \eqref{main:eq3d}:  
\begin{theorem}
The discrete energy law holds: 
\begin{equation}
 (D_t \bu, \bu)  + R_{e}^{-1} \|\nabla\times \bu\|^2 = 0. %(\bm{f}, \bu).
\end{equation}
\end{theorem}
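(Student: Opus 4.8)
The plan is to test the discrete momentum equation \eqref{eqn1} against $\bv = \bu$ and to argue that the nonlinear Lamb term and the pressure-gradient term each contribute nothing, leaving exactly the two terms of the claimed identity. This mirrors the continuous energy law \eqref{resistive:energylaw} (with $\bm{f} = \bm{0}$), and the discrete scheme \eqref{main:eq3d} is arranged precisely so that the same cancellations survive verbatim.

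First I would substitute $\bv = \bu$ into \eqref{eqn1}, which is legitimate because $\bu \in H_0^h(\curl,\Omega)$ belongs to the test space $\bm{X}_h$. This produces
$$(D_t \bu, \bu) - (\bu \times \bm{\omega}, \bu) + R_e^{-1}(\nabla \times \bu, \nabla \times \bu) + (\nabla p, \bu) = 0.$$
It then remains only to eliminate the second and fourth terms.

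For the nonlinear term I would use that the cancellation is \emph{pointwise} rather than merely integral: at almost every $\bm{x} \in \Omega$ the vector $\bu \times \bm{\omega}$ is orthogonal to $\bu$, so the integrand $(\bu \times \bm{\omega}) \cdot \bu$ vanishes identically and hence $(\bu \times \bm{\omega}, \bu) = 0$. This is exactly the payoff of writing the convection in Lamb form: no integration by parts is invoked, so no discrete compatibility or boundary condition is needed to kill this term. For the pressure term I would appeal to the discrete incompressibility constraint \eqref{eqn7}, which holds for every $q \in H_0^h(\grad)$; since the computed pressure $p$ itself lies in this space, the choice $q = p$ gives $(\bu, \nabla p) = (\nabla p, \bu) = 0$.

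Combining these two cancellations with $(\nabla \times \bu, \nabla \times \bu) = \|\nabla \times \bu\|^2$ yields the stated identity. There is no genuine obstacle here; the only two points requiring care are recognizing that the Lamb term vanishes at the pointwise level (so that no structure-preserving projection is required for this particular estimate) and that the pressure is an admissible test function in \eqref{eqn7}. I would also note that the argument never invokes \eqref{eqn4}, consistent with the fact that the energy law, unlike the helicity conservation treated later, is insensitive to how the vorticity $\bm{\omega}$ is discretely coupled.
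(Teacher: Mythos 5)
Your proof is correct, and it is the standard argument the paper evidently intends: the paper states this theorem without supplying a proof, and your route (testing \eqref{eqn1} with $\bv=\bu$, killing the Lamb term pointwise via $(\bu\times\bm{\omega})\cdot\bu=0$, and killing the pressure term by choosing $q=p$ in \eqref{eqn7}) is exactly the discrete analogue of the continuous energy identity \eqref{resistive:energylaw}. Your side remarks --- that no projection or integration by parts is needed for the nonlinear term and that \eqref{eqn4} plays no role here --- are also accurate.
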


We shall now discuss the fluid helicity. The following theorems can be similarly stated and proved for  any contractible subdomain $\tilde{\Omega}\subset \Omega$ if the variables satisfy the boundary conditions \eqref{NS-bc}, or even weaker conditions, on the boundary of $\tilde{\Omega}$. Therefore we obtain identities for both local and global helicity. For simplicity of presentation, we focus on the helicity on $\Omega$, i.e., the global helicity, and stick to the boundary conditions \eqref{NS-bc}. 

\begin{theorem} Any solution for \eqref{main:eq3d} satisfies the following identity for the fluid helicity: 
\begin{equation} 
D_t \int_{{\Omega}}\bu\cdot\bm{\omega} = -2 R_{e}^{-1}\int_{{\Omega}}\nabla\times \bm{u}\cdot \nabla\times \bm{\omega}. 
\end{equation} 
\end{theorem}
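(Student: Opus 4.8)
The plan is to differentiate the discrete helicity with the midpoint product rule established in the preceding Lemma, which gives
\begin{equation*}
D_t \int_\Omega \bu\cdot\bm{\omega} = \int_\Omega D_t\bu\cdot\bm{\omega} + \int_\Omega \bu\cdot D_t\bm{\omega},
\end{equation*}
where the undifferentiated factors are exactly the midpoint values used throughout \eqref{main:eq3d}. I would then treat the two terms separately and show that each equals $-R_e^{-1}(\nabla\times\bu,\nabla\times\bm{\omega})$, so that together they reproduce the asserted right-hand side. The factor of $2$ thus arises structurally, mirroring the continuous computation behind \eqref{id:Hm}, rather than by accident.

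For the first term I would test the momentum equation \eqref{eqn1} with $\bv=\bm{\omega}$, which is admissible since $\bm{\omega}\in H_0^h(\curl,\Omega)$, obtaining
\begin{equation*}
(D_t\bu,\bm{\omega}) = (\bu\times\bm{\omega},\bm{\omega}) - R_e^{-1}(\nabla\times\bu,\nabla\times\bm{\omega}) - (\nabla p,\bm{\omega}).
\end{equation*}
The convective term vanishes pointwise because $\bu\times\bm{\omega}$ is orthogonal to $\bm{\omega}$. For the pressure term I would invoke the discrete vorticity relation \eqref{eqn4} with $\bm{\mu}=\nabla p$, legitimate because $\nabla$ maps $H_0^h(\grad,\Omega)$ into $H_0^h(\curl,\Omega)$ in the discrete de~Rham sequence, rewriting $(\nabla p,\bm{\omega})=(\nabla\times\bu,\nabla p)$; an integration by parts then leaves $\int_\Omega\bu\cdot(\nabla\times\nabla p)=0$ together with a boundary term $\oint_{\partial\Omega}(\bu\times\nabla p)\cdot\bm{n}$ that vanishes by the condition $\bu\times\bm{n}=\bm{0}$ in \eqref{NS-bc}. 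Hence $(D_t\bu,\bm{\omega})=-R_e^{-1}(\nabla\times\bu,\nabla\times\bm{\omega})$.

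The second term is the delicate one, and I expect it to be the main obstacle, because $D_t\bm{\omega}$ involves the increment $\bm{\omega}^{n+1}-\bm{\omega}^n$ while \eqref{eqn4} only constrains the midpoint. The clean route is to read \eqref{eqn4} as defining $\bm{\omega}^k=\mathbb{Q}_h^{\curl}[\nabla\times\bu^k]$ at each time level; by linearity of the projection and of the curl this is consistent with the midpoint equation, so $(\bm{\omega}^k,\bm{\mu})=(\nabla\times\bu^k,\bm{\mu})$ holds for $k=n,n+1$ and all $\bm{\mu}\in H_0^h(\curl,\Omega)$. Subtracting the two levels yields $(D_t\bm{\omega},\bm{\mu})=(\nabla\times D_t\bu,\bm{\mu})$. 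Choosing $\bm{\mu}=\bu$ and integrating by parts, with the boundary term again killed by $\bu\times\bm{n}=\bm{0}$, converts $(\bu,D_t\bm{\omega})$ into $(D_t\bu,\nabla\times\bu)$. Finally, since $D_t\bu=(\bu^{n+1}-\bu^n)/\Delta t\in H_0^h(\curl,\Omega)$, I can test \eqref{eqn4} once more with $\bm{\mu}=D_t\bu$ to replace $\nabla\times\bu$ by $\bm{\omega}$, giving $(\bu,D_t\bm{\omega})=(D_t\bu,\bm{\omega})$.

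Adding the two contributions produces $D_t\int_\Omega\bu\cdot\bm{\omega}=2(D_t\bu,\bm{\omega})=-2R_e^{-1}(\nabla\times\bu,\nabla\times\bm{\omega})$, which is precisely the claimed identity. The only steps needing care are the two surface integrals, both controlled by the vorticity-type boundary condition $\bu\times\bm{n}=\bm{0}$, and the passage from the midpoint constraint \eqref{eqn4} to a per-level identity for $\bm{\omega}$; this per-level identity is what makes the symmetry $(\bu,D_t\bm{\omega})=(D_t\bu,\bm{\omega})$ valid and is the crux of the argument.
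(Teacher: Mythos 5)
Your proof is correct and follows essentially the same route as the paper's: apply the discrete product rule, reduce to $2(D_t\bu,\bm{\omega})$, test the momentum equation with $\bv=\bm{\omega}$, kill the convective term pointwise and the pressure term via the discrete complex and the boundary condition $\bu\times\bm{n}=\bm{0}$. The only difference is that you work with the weak form rather than the projected strong form \eqref{discretemethodfull}, and you explicitly justify the symmetry $(\bu,D_t\bm{\omega})=(D_t\bu,\bm{\omega})$ via the per-level vorticity identity, a step the paper asserts without proof.
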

\begin{proof} 
We note the following discrete weak form in the following formulation: 
\begin{equation}\label{dmfull} 
D_t \bu - {\mathbb{Q}_h^{\curl}} ( \bu \times {\mathbb{Q}_h^{\curl}} [\nabla \times \bu] ) + \nabla p  + R_{e}^{-1}{\nabla}_{h}\times\nabla\times \bm{u}  = \bm{0}.  
\end{equation}
We note that we can define the fluid helicity using either the total integral of 
$\bm{u} \cdot \nabla \times \bm{u}$ or $\bm{u} \cdot \bm{\omega}$, for both of which, we note that the following holds:  
\begin{eqnarray*} 
D_t \int_{ {\Omega}} \bu \cdot \bm{\omega} &=& \int_{ {\Omega}} D_t \bu \cdot \bm{\omega} + \int_{ {\Omega}} \bu \cdot D_t \bm{\omega}= 2 \int_{ {\Omega}} D_t \bu \cdot \bm{\omega} \\
&=& 2 \left [ \int_{{\Omega}} [ \bu \times {\mathbb{Q}}_h^{\curl} (\nabla \times \bu) ] \cdot \bm{\omega}  +\int_{{\Omega}} \nabla p \cdot \bm{\omega} -R_{e}^{-1}\int_{{\Omega}}\nabla\times \bm{u}\cdot \nabla\times \bm{\omega} \right ] \\
&=& 
-2R_{e}^{-1}\int_{{\Omega}}\nabla\times \bm{u}\cdot \nabla\times \bm{\omega}.
\end{eqnarray*} 
This implies the desired result. 
\end{proof} 

We remark that in Girault, the following finite element scheme has been introduced \cite{girault1990curl}: find $(\bu, p) \in H^h_0({\rm curl},\Omega) \times H^h_0({\rm grad})$, 
such that for all $(\bv, \bm{\mu},  q)  \in H^h_0({\rm curl},\Omega) \times H^h_0({\rm grad})$:
\begin{subeqnarray}\label{main:eq4d}  \slabel{eqn41}
\left( D_t \bu, \bv \right) - (\bu \times \nabla \times \bm{u}, \bv) + R_e^{-1}(\nabla\times \bu, \nabla\times \bv) + (\nabla p, \bv)  &=& (\bm{f}, \bv), \qquad  \\\slabel{eqn42}
(\bu ,\nabla q) &=& 0, 
\end{subeqnarray} 
subject to the same boundary condition with the scheme \label{main:ed3d}. 
However, we can see that this scheme does not preserve helicity (see Figure \ref{hahahihi} below). 

\section{Helicity-conservative PINN models}\label{sec:pinn} 

We shall now look at the PINN model, which uses the strong form of PDE as a loss function. We continue to consider the following 3D Navier-Stokes equation with initial condition as given in \eqref{main:eq2}.  On the other hand, for the helicity conservation, we shall assume that the external force is zero. Therefore, we shall assume $\bm{f} = \bm{0}$.  

%\begin{subeqnarray}
%\partial_t\bm{u}-\bm{u} \times \bm{\omega} +R_e^{-1}\nabla\times\nabla\times\bm{u}+\nabla p &=& \bm{0}, \quad %\mbox{ in } (0, T] \times \Omega \\
%\bm{\omega} &=& \nabla\times\bm{u}, \quad \mbox{ in } (0,T] \times %\Omega \\
%\nabla \cdot \bm{u} &=& 0,  \quad \mbox{ in } (0, T] \times \Omega, 
%\label{3D-NS}
%\end{subeqnarray}
%subject to the boundary conditions for \eqref{3D-NS}: for all {$\bm{x}%\in\partial \Omega$} and {$t > 0$},
%\begin{equation}
%\bm{u} \times \bm{n} = 0, \quad \mbox{ and } \quad p=\frac{1}{2}\left|%\bm{u}\right|^2+\tilde{p} = 0.
%\label{bound-cond}
%\end{equation}
We note that we are training $p = \frac{1}{2}\left|\bm{u}\right|^2+\tilde{p}$, $\tilde{p}$ is the output $p_{NN}$ of PINN, and $p$ is the modified pressure. Our model minimizes the following functionals for finding $\bm{u}_{NN}$ and $p_{NN}$. Initially, the term of initial condition of $\bm{u}$ and $p$ is:
\begin{equation}
\mathcal{L}_{{\textsf{init}}_{(\bm{u},p)}} = \frac{1}{N_0} \sum_{i=1}^{N_0} \left ( \left \| \bm{u}(t^i, \bm{x}^i) - \bm{u}^i \right\|_0^2 + \left |p(t^i, \bm{x}^i) - p^i \right|^2 \right ), 
\label{Loss_init_NN}
\end{equation}
where $N_0$ is the number of sampling points for initial condition, and $\bm{u}_{NN} = (\bm{u}^i = \{u_1^i, u_2^i, u_3^i\})_{i=1}^{i=N_0}$ and $p_{NN} = (p^i)_{i=1}^{i=N_0}$ denote the initial training data on 
\begin{equation}
\bm{u}(t^i,\bm{x}^i) = \{u_1(t^i,x^i,y^i,z^i), u_2(t^i,x^i,y^i,z^i), u_3(t^i,x^i,y^i,z^i)\}
\end{equation}
and $p(t^i,x^i,y^i,z^i)$ for $i=1,\cdots,N_0$. %Since $\bm{\omega}$ is also part of initial condition and defined as $\bm{\omega} = \nabla\times\bm{u}$, we have the following term for initial $\bm{\omega}$:
%\begin{equation}
%\mathcal{L}_{{\textsf{init}}_{\bm{\omega}}} = \frac{1}{N_0} \sum_{i=1}^{N_0} \left \| \bm{\omega}(t^i, \bm{x}^i) - \bm{\omega}^i \right \|^2,  
%\label{Loss_init_omega}
%\end{equation}
%where $\bm{\omega}^i$ denotes the initial training data on $\bm{\omega}(t^i,x^i,y^i,z^i)$. 
Now, the loss function for boundary conditions can be defined according to \eqref{NS-bc}:
\begin{equation}
\mathcal{L}_{\textsf{bdry}_{(\bm{u},p)}} = \frac{1}{N_b} \sum_{i=1}^{N_b}  \left ( \|\bm{u}^i \times \bm{n} \|_0^2 
+ \left ( \frac{1}{2} \left \|\bm{u}^i \right \|_0 + \tilde{p}^i \right )^2 \right ), 
\label{loss-bound}
\end{equation}
where $N_b$ is the number of sampling points for boundary condition. Next we consider the loss function for the momentum equation as well as the loss function for the divergence free condition for $\bm{u}$, respectively as follows: 
\begin{subeqnarray}
\mathcal{L}_{\textsf{momentum}} &=& \frac{1}{N_f} \sum_{i=1}^{N_m} \left ( \partial_t \bm{u}^i - \bm{u}^i \times \nabla\times \bm{u}^i + R_e^{-1} \nabla \times \nabla\times\bm{u}^i + \nabla p^i \right )^2,
\label{loss_f} \\ 
\mathcal{L}_{\textsf{divergence}} &=& \frac{1}{N_f} \sum_{i=1}^{N_f} |{\rm div} \bm{u}^i |^2,  
\label{loss_div}
\end{subeqnarray}
where $N_f$ is the number of sampling points for the equation. The functional that the PINN model to optimize is given as follows: 
%\begin{equation}
%\min_{\bm{u}_{NN}, p_{NN}} \mathcal{L}_{\textsf{init}_{(\bm{u},p)}} + %\mathcal{L}_{\textsf{init}_{\bm{\omega}}} + \mathcal{L}_{\textsf{bdry}} + %\mathcal{L}_{\textsf{momentum}} + \mathcal{L}_{\textsf{divergence}}
%\label{loss}
%\end{equation}
\begin{equation}
\min_{\bm{u}_{NN}, p_{NN}} \mathcal{L}_{\textsf{init}_{(\bm{u},p)}} + \mathcal{L}_{\textsf{bdry}} + \mathcal{L}_{\textsf{momentum}} + \mathcal{L}_{\textsf{divergence}}
\label{loss}
\end{equation}
The original PINN approach trains the NN model to predict the entire space-time at once. In complex cases, this can be more difficult to learn. Seq2seq strategy was proposed in \cite{krishnapriyan2021characterizing}, where the PINN learns to predict the solution at each time step, instead of all times. Note that the only data available of the first sequence is from the PDE itself, i.e., just the initial condition. We take the prediction at $t=$d$\mathbb{T}$ by using the model of the first sequence and use this as the initial condition to make a prediction in the next sequence, and so on. which can be shown in Figure \ref{seq2seq}.

\begin{figure}[H]
\centering
   \includegraphics[width=1\textwidth]{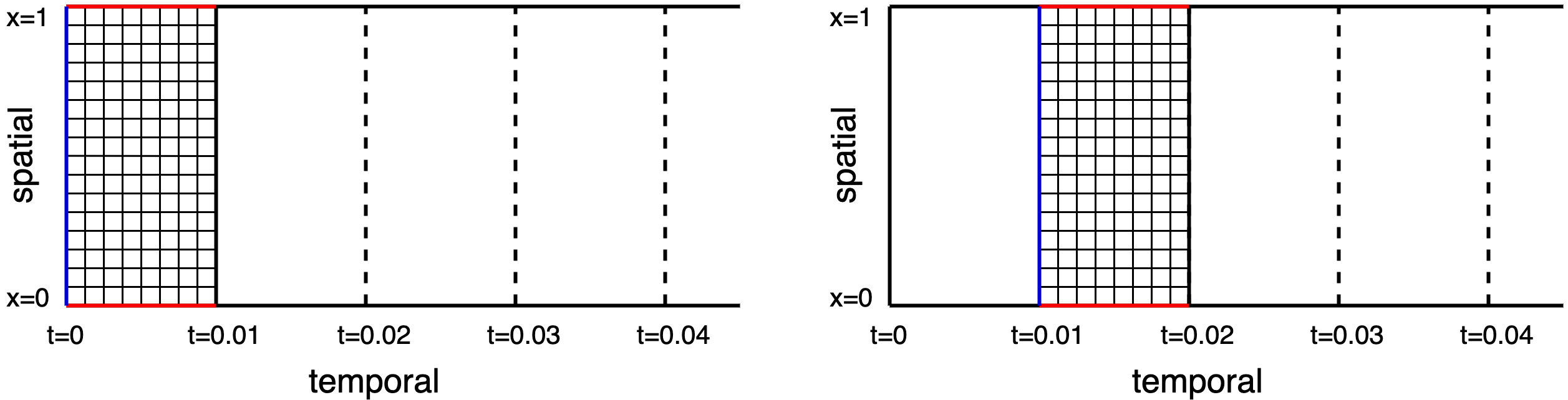}
   \caption{Seq2seq PINN. The blue line is initial condition for each sequence. The red line is boundary condition for each sequence. The domain will be uniformly sectioned. When training of the first sequence finished, the solution at $t=0.01$ will be calculated and used as the initial condition for the next sequence.}
   \label{seq2seq}
\end{figure}
After training our model, we calculate whether the helicity conserves. The helicity is defined as follows. Here we use Gaussian quadrature to replace the integral.Throughout the paper, $Q_\Omega(\cdot)$ denotes the Gaussian quadrature on $\Omega$.
\begin{equation}
\mathcal{H}_f = \int_{\Omega}\bm{u}\cdot \nabla \times \bm{u}\ {\rm{d}} x 
\label{helicity}
\end{equation}
Computationally, we want realize it using the neural network function as follows: 
\begin{equation}
\mathcal{H}^{up_{NN}}_f = \int_{\Omega} \bm{u}_{NN} \cdot \nabla \times \bm{u}_{NN} \ {\rm{d}} x,  
\label{helicityd}
\end{equation}
where $\bm{\omega} = \nabla \times \bm{u}_{NN}$. In conclusion, our helicity-conservative PINN model shows as Algorithm \ref{PINN_algorithm}

\begin{algorithm}
\caption{$up_{NN}$ Network}
\label{PINN_algorithm}
\begin{algorithmic}[1]
	\State {\bf{Input:}} Initial sample points for $\bm{u}$, $p$ and sample points in $\Omega$ for numerical integration and training algorithm, and the number of layers $L$, the final time level $\mathbb{T}$ and the time domain of one sequence d$\mathbb{T}$.   
	\State {\bf{Output:}} $\bm{u}$, $\bm{w}$, $p$ in the sample points in $\Omega$.
	\State Initialize PINN model parameters, weights and bias denoted by  
	$(\bm{W}^{1,(0)},...,\bm{W}^{L,(0)})$ and $(\bm{b}^{1,(0)},...,\bm{b}^{L,(0)})$
	\State Generate initial points and corresponding $\bm{u}$, $\bm{w}$, $p$
    \For{seq=1, $\cdots$ ,$\mathbb{T}$/d$\mathbb{T}$}
    \State Generate sample points in $\Omega$
    \Repeat
    	\State Compute $\bm{u}$ and $p$ using neural network, and then $\bm{\omega} =  \nabla \times \bm{u}$
    	\State Compute loss of initial condition %\Comment{Defined as $\mathcal{L}_{init_{NN}}$ \ref{Loss_init_NN} and $\mathcal{L}_{init_{\bm{\omega}}}$ \ref{Loss_init_omega}}% 
    	\State Compute loss of boundary condition %\Comment{Defined as $\mathcal{L}_{bound}$ \ref{loss-bound}}%
    	\State Compute loss of equation %\Comment{Defined as $\mathcal{L}_{\bm{f}}$ and $\mathcal{L}_{div}$ \ref{loss_div}}%
		\State $Loss = \mathcal{L}_{init_{NN}} + \mathcal{L}_{\textsf{bdry}_{(\bm{u},p)}}+\mathcal{L}_{\textsf{momentum}}+\mathcal{L}_{\textsf{divergence}}$
		\State Optimize $Loss$ based on {\bf{Adam}} \Comment{Learning rate of {\bf{Adam}} changes depending on $Loss$}
    \Until{$Loss <$ particular number}
    
    \State Compute helicity $\mathcal{H}^{up_{NN}}_f$ 
    \State Compute $\bm{u}$, $\bm{w}$, $p$ at the end of this sequence, used as initial condition for next sequence
    \EndFor

\end{algorithmic}
\end{algorithm}

\begin{theorem} 
The Algorithm \ref{PINN_algorithm} admits the neural network solution 
$(\bm{u}_{NN}, p_{NN})$ that satifies the following 
identity for the fluid helicity: 
\begin{eqnarray*} 
D_t Q_{\Omega} \left (\bu_{NN} \cdot \nabla \times \bm{u}_{NN} \right) = -2 R_{e}^{-1} Q_{\Omega} \left ( \nabla\times \bm{u}_{NN} \cdot \nabla \times \nabla \times \bm{u}_{NN} \right ) + 2 Q_{\Omega}( \nabla p_{NN} \cdot \bm{\omega}).  
\end{eqnarray*} 
\end{theorem}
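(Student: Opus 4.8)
The plan is to imitate the continuous helicity computation (the proof of \eqref{id:Hm}) and its finite element counterpart, while keeping careful track of exactly which manipulations survive once the exact integral $\int_\Omega$ is replaced by the Gaussian quadrature $Q_\Omega$. The starting premise is that training has driven $\mathcal{L}_{\textsf{momentum}}$ to zero at the quadrature nodes, so that the network solution satisfies the strong momentum equation pointwise at those nodes: $\partial_t \bm{u}_{NN} = \bm{u}_{NN}\times\bm{\omega} - R_e^{-1}\nabla\times\nabla\times\bm{u}_{NN} - \nabla p_{NN}$, with $\bm{\omega} = \nabla\times\bm{u}_{NN}$. Because $\bm{\omega}$ is obtained from $\bm{u}_{NN}$ by automatic differentiation, the relation $D_t\bm{\omega} = \nabla\times D_t\bm{u}_{NN}$ holds \emph{exactly}, not merely weakly, which is the feature that lets the strong-form argument go through more transparently than in the finite element case.

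First I would record two structural facts. The quadrature $Q_\Omega$ is a fixed weighted sum of spatial point evaluations, whereas $D_t$ differences only in time; hence the two commute, $D_t Q_\Omega(\phi) = Q_\Omega(D_t\phi)$. Moreover, the discrete Leibniz identity of the Lemma holds nodewise and therefore passes through $Q_\Omega$, giving $D_t Q_\Omega(\bm{u}_{NN}\cdot\bm{\omega}) = Q_\Omega(D_t\bm{u}_{NN}\cdot\bm{\omega}) + Q_\Omega(\bm{u}_{NN}\cdot D_t\bm{\omega})$.

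Next I would merge the two contributions to produce the factor $2$. Using $D_t\bm{\omega} = \nabla\times D_t\bm{u}_{NN}$ together with the curl integration-by-parts identity and the boundary condition $\bm{u}_{NN}\times\bm{n}=0$ (enforced through $\mathcal{L}_{\textsf{bdry}}$), I would identify $Q_\Omega(\bm{u}_{NN}\cdot D_t\bm{\omega})$ with $Q_\Omega(\bm{\omega}\cdot D_t\bm{u}_{NN})$, so that $D_t Q_\Omega(\bm{u}_{NN}\cdot\bm{\omega}) = 2\,Q_\Omega(D_t\bm{u}_{NN}\cdot\bm{\omega})$. Substituting the momentum equation for $D_t\bm{u}_{NN}$ then splits the right-hand side into three pieces. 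The convective piece vanishes exactly, since $(\bm{u}_{NN}\times\bm{\omega})\cdot\bm{\omega}=0$ pointwise and the quadrature of the identically zero function is zero. The diffusion piece yields $-2R_e^{-1}Q_\Omega(\bm{\omega}\cdot\nabla\times\nabla\times\bm{u}_{NN})$, which is the first term on the right-hand side of the claim because $\bm{\omega}=\nabla\times\bm{u}_{NN}$. The pressure piece contributes the term $2\,Q_\Omega(\nabla p_{NN}\cdot\bm{\omega})$ (with sign fixed by the momentum residual), completing the stated identity.

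The step I expect to be the crux is precisely the pressure piece, and it is where the theorem departs from both the continuous identity \eqref{id:Hm} and the finite element identity. In those settings the analogue of $Q_\Omega(\nabla p_{NN}\cdot\bm{\omega})$ vanishes: integration by parts turns it into $-\int_\Omega p\,\nabla\cdot\bm{\omega}$, which is zero since $\nabla\cdot\bm{\omega}=\nabla\cdot\nabla\times\bm{u}=0$ and the boundary term drops by $p=0$ on $\partial\Omega$ (in the discrete de~Rham setting this is the orthogonality of gradients to the discretely divergence-free vorticity). For the PINN no such structure is available: $Q_\Omega$ is exact only for polynomials up to a fixed degree while the network integrands are generically non-polynomial, so the discrete integration by parts fails and $Q_\Omega(\nabla p_{NN}\cdot\bm{\omega})\neq 0$ in general. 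I would therefore present this surviving quadrature term honestly as the pollution term obstructing exact helicity conservation, rather than discarding it. The same exactness caveat touches the doubling step, where the curl integration by parts is itself exact only up to quadrature error; I would either state the identity as holding in the quadrature-exact regime or carry the quadrature defects as explicit remainders, and I expect that cleanly delineating the exact pointwise identities (the convective cancellation, the $D_t$--$Q_\Omega$ commutation, and the nodewise Leibniz rule) from the ones that rely on integration by parts will be the main technical care the proof requires.
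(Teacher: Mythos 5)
Your proposal is correct and follows essentially the same route as the paper's proof: start from the strong-form momentum residual vanishing at the sample points, apply the discrete Leibniz rule to $D_t(\bm{u}_{NN}\cdot\bm{\omega})$, double via the curl integration-by-parts identity with the boundary condition, substitute the momentum equation, cancel the convective term pointwise, and keep the surviving pressure term $2\,Q_\Omega(\nabla p_{NN}\cdot\bm{\omega})$. You are in fact more careful than the paper, which writes the doubling step $\int\bm{u}\cdot D_t\bm{\omega}=\int D_t\bm{u}\cdot\bm{\omega}$ as though it were a pointwise identity and does not acknowledge that it, unlike the convective cancellation, holds only up to quadrature error for non-polynomial network integrands.
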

\begin{proof} 
We note the following discrete weak form in the following formulation: 
\begin{equation}\label{dmfull} 
D_t \bu - ( \bu \times [\nabla \times \bu] ) + \nabla p  + R_{e}^{-1}\nabla \times\nabla\times \bm{u} = \bm{0} ,\quad \forall (t^i, \bm{x}^i) \in (0, T] \times \Omega.   
\end{equation}
This leads to 
\begin{eqnarray*} 
D_t \bu \cdot \bm{\omega} &=& D_t \bu \cdot \bm{\omega} + \bu \cdot D_t \bm{\omega} =  2 D_t \bu \cdot \bm{\omega} \\
&=& 2 \left [ (\bu \times \nabla \times \bu) \cdot \bm{\omega} +\nabla p \cdot \bm{\omega} - R_{e}^{-1} \nabla \times \bm{u} \cdot \nabla \times \bm{\omega}  \right ] \\
&=& 2 \nabla p \cdot \bm{\omega} - 2R_{e}^{-1} \nabla\times \bm{u}\cdot \nabla\times \bm{\omega}.
\end{eqnarray*} 
By taking numerical quadrature rule to define the fluid helicity, we obtain the desired result and this completes the proof.
\end{proof} 
\begin{corollary}
If the integral $Q_\Omega$ is done accurately and $R_{e} = \infty$, then the helicity from Algorithm \ref{PINN_algorithm} is conserved exactly.  
\begin{eqnarray*} 
D_t \int_\Omega \left (\bu_{NN} \cdot \nabla \times \bm{u}_{NN} \right) \, dx &=& 0.
\end{eqnarray*} 
\end{corollary}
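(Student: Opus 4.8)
The plan is to obtain the corollary directly from the helicity identity established in the preceding theorem by specializing its two right-hand-side terms. That theorem gives
\[
D_t Q_{\Omega}\!\left(\bu_{NN}\cdot\nabla\times\bu_{NN}\right) = -2R_e^{-1} Q_{\Omega}\!\left(\nabla\times\bu_{NN}\cdot\nabla\times\nabla\times\bu_{NN}\right) + 2 Q_{\Omega}(\nabla p_{NN}\cdot\bm{\omega}),
\]
so it suffices to show that both terms on the right vanish under the stated hypotheses. Setting $R_e=\infty$ immediately annihilates the diffusion term, since it carries the prefactor $R_e^{-1}=0$; this needs no structural argument and leaves only the pressure contribution $2Q_\Omega(\nabla p_{NN}\cdot\bm{\omega})$.

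The remaining work is to show that this pressure term is zero. First I would invoke the hypothesis that $Q_\Omega$ is exact to replace the quadrature by the genuine integral, $Q_\Omega(\nabla p_{NN}\cdot\bm{\omega}) = \int_\Omega \nabla p_{NN}\cdot\bm{\omega}\,\mathrm{d}x$, and then integrate by parts to move the gradient off the pressure,
\[
\int_\Omega \nabla p_{NN}\cdot\bm{\omega}\,\mathrm{d}x = \int_{\partial\Omega} p_{NN}\,(\bm{\omega}\cdot\bm{n})\,\mathrm{d}S - \int_\Omega p_{NN}\,(\div\bm{\omega})\,\mathrm{d}x.
\]
The volume integral vanishes for a structural reason requiring nothing beyond smoothness of $\bu_{NN}$: since $\bm{\omega}=\nabla\times\bu_{NN}$, we have $\div\bm{\omega}=\div(\nabla\times\bu_{NN})\equiv 0$ as a differential identity, so the divergence-free property of the vorticity is automatic and does not depend on how accurately the network enforces $\div\bu_{NN}=0$. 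For the boundary integral I would use the boundary condition $\bu_{NN}\times\bm{n}=\bm{0}$ on $\partial\Omega$, which—as already observed below \eqref{NS-bc}—forces $(\nabla\times\bu_{NN})\cdot\bm{n}=\bm{\omega}\cdot\bm{n}=0$, so the boundary term drops irrespective of the value of $p_{NN}$. Combining the vanishing boundary and volume terms with the vanishing diffusion term yields $D_t\int_\Omega \bu_{NN}\cdot\nabla\times\bu_{NN}\,\mathrm{d}x=0$.

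The hard part, and the place where the statement is genuinely an idealization, is the boundary term. In the PINN framework the boundary condition is enforced only weakly through $\mathcal{L}_{\textsf{bdry}_{(\bu,p)}}$, so $\bm{\omega}\cdot\bm{n}$ is not exactly zero for a trained network; the corollary should therefore be read as holding in the limit where the boundary loss is driven to zero, equivalently where $\bu_{NN}\times\bm{n}=\bm{0}$ holds exactly on $\partial\Omega$. This is a modeling caveat rather than a computational one: the divergence-free-ness of $\bm{\omega}$ that kills the volume term is exact by construction, and the exactness of $Q_\Omega$ is assumed in the hypothesis, so the only approximate quantity in practice is the boundary trace. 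I would state this assumption explicitly to make the conservation claim fully rigorous.
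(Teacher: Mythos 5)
Your proposal is correct and follows essentially the same route as the paper: specialize the helicity identity at $R_e=\infty$, use exactness of $Q_\Omega$ to pass to the true integral, and integrate the pressure term by parts so that the volume contribution dies because $\operatorname{div}(\nabla\times\bu_{NN})\equiv 0$. You are in fact more careful than the paper, which only cites the divergence-free property and leaves the boundary term's vanishing (via $\bm{\omega}\cdot\bm{n}=0$ or $p=0$ on $\partial\Omega$) and its weak enforcement in the PINN setting implicit.
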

\begin{proof}
We assume that the integral is computed exactly. Then, we have that
\begin{equation}
Q_\Omega (\nabla p \cdot \bm{\omega})  = \int_\Omega \nabla p \cdot \bm{\omega} \, dx = - \int_\Omega p {\rm div} \bm{\omega} \, dx + \int_{\partial \Omega} p \cdot \bm{n} \bm{\omega}\, ds = 0.
\label{divw-proof}
\end{equation} 
The last equality is due to the fact that $\bm{\omega}$ is divergence-free. This completes the proof. 
\end{proof} 

Helicity-conservative PINN model should preserve the initial helicity value in the time evolution. For demonstrating the helicity conservative property of Helicity-conservative PINN model, We describe the algorithm based on  Girault \cite{girault1990curl} as $\omega_{NN}$ Network. In $\omega_{NN}$ Network, the following form of the equation is considered:
\begin{equation}
	\begin{aligned}
		\partial_t\bm{u}_{NN}-\bm{u}_{NN} \times \bm{\omega}_{NN} +R_e^{-1}\nabla\times\nabla\times\bm{u}_{NN}+\nabla p_{NN} &= \bm{0}, \\
		\bm{\omega}_{NN}-\nabla\times\bm{u}_{NN} &= \textbf{0}, 
	\end{aligned}
	\label{3D-NS}
\end{equation}

This means the initial, boundary and divergence-free condition are all the same as Helicity-conservative PINN model, except that in $\omega_{NN}$ Network, $\bm{\omega}$ is directly generated by neural network, rather than calculated by $\bm{u}$. In summary, the difference between Helicity-conservative PINN model and $\omega_{NN}$ network is as follows. First, the loss of initial $\bm{\omega}$ should be computed directly, rather than $\bm{\omega} = \nabla\times\bm{u}$. Second, the loss of $\bm{\omega}$ should be added, which is:
\begin{eqnarray}
\mathcal{L}_{\bm{\omega}} = \frac{1}{N_f} \sum_{i=1}^{N_f} \|\bm{\omega}^i - \nabla \times \bm{u}^i\|_0^2.
\label{loss_omega}
\end{eqnarray}
Third, the loss function for the momentum equation should be:
\begin{eqnarray}
\mathcal{L}_{\textsf{momentum}} &=& \frac{1}{N_f} \sum_{i=1}^{N_m} \left ( \partial_t \bm{u}^i - \bm{u}^i \times \bm{\omega}^i + R_e^{-1} \nabla \times \nabla\times\bm{u}^i + \nabla p^i \right )^2. 
\label{loss_w_momentum} 
\end{eqnarray}
Fourth, the helicity is defined as:
\begin{equation}
\mathcal{H}^{\omega_{NN}}_f = \int_{\Omega} \bm{u}_{NN} \cdot \bm{\omega}_{NN} \ {\rm{d}} x,  
\label{helicity_w}
\end{equation}
In conclusion, the algorithm of $\omega_{NN}$ Network is presented as Algorithm \ref{omega_network}.
\begin{algorithm}
\caption{$\omega_{NN}$ Network}
\label{omega_network}
\begin{algorithmic}[1]
	\State {\bf{Input:}} Initial sample points for $\bm{u}$, $\bm{\omega}$, $p$ and sample points in $\Omega$ for numerical integration and training algorithm, and the number of layers $L$, the final time level $\mathbb{T}$ and the time domain of one sequence d$\mathbb{T}$.   
	\State {\bf{Output:}} $\bm{u}$, $\bm{w}$, $p$ in the sample points in $\Omega$.
	\State Initialize PINN model parameters, weights and bias denoted by  
	$(\bm{W}^{1,(0)},...,\bm{W}^{L,(0)})$ and $(\bm{b}^{1,(0)},...,\bm{b}^{L,(0)})$
	\State Generate initial points and corresponding $\bm{u}$, $\bm{w}$, $p$
    \For{seq=1,$\cdots$,$\mathbb{T}$/d$\mathbb{T}$} 
    \State Generate scattered points in domain
    \Repeat
    	\State Compute $\bm{u}$, $\bm{\omega}$ and $p$ using neural network.
    	\State Compute loss of initial condition 
    	\State Compute loss of $\bm{\omega}$ 
    	\State Compute loss of boundary condition 
    	\State Compute loss of equation 
    	\State $Loss = \mathcal{L}_{init_{NN}}+\mathcal{L}_{\bm{\omega}}+\mathcal{L}_{\textsf{bdry}_{(\bm{u},p)}}+\mathcal{L}_{\textsf{momentum}}+\mathcal{L}_{\textsf{divergence}}$
		\State Optimize $Loss$ based on {\bf{Adam}} \Comment{Learning rate of {\bf{Adam}} changes depending on $Loss$}
    \Until{$Loss <$ particular number}
    
    \State Compute helicity $\mathcal{H}^{\omega_{NN}}_f$ 
    \State Compute $\bm{u}$, $\bm{w}$, $p$ at the end of this sequence, used as initial condition for next sequence
    \EndFor

\end{algorithmic}
\end{algorithm}

\begin{theorem} 
The Algorithm \ref{omega_network} admits the neural network solution 
$(\bm{u}_{NN}, \bm{\omega}_{NN}, p_{NN})$ that satifies the following identity for the fluid helicity: 
\begin{equation} 
D_t Q_{\Omega} (\bu_{NN} \cdot \bm{w}_{NN}) = -2 R_{e}^{-1} Q_{{\Omega}} (\nabla\times \bm{u}_{NN} \cdot \nabla\times \nabla\times \bm{u}_{NN}) + 2 Q_{\Omega}(\nabla p_{NN} \cdot \bm{\omega}_{NN} ).  
\label{divpw-proof}
\end{equation} 
\end{theorem}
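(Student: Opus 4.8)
The plan is to follow the proof of the preceding theorem for Algorithm \ref{PINN_algorithm} essentially verbatim, since the $\omega_{NN}$ scheme differs only in that $\bm{\omega}_{NN}$ is an independent network output constrained to equal $\nabla\times\bu_{NN}$ through the loss $\mathcal{L}_{\bm{\omega}}$ in \eqref{loss_omega}. First I would record the strong form \eqref{3D-NS} satisfied by the trained outputs at the sample points,
\begin{equation*}
D_t \bu_{NN} - \bu_{NN}\times\bm{\omega}_{NN} + \nabla p_{NN} + R_{e}^{-1}\nabla\times\nabla\times\bu_{NN} = \bm{0},
\end{equation*}
together with $\bm{\omega}_{NN}=\nabla\times\bu_{NN}$, and treat both as holding exactly (this is the sense in which the algorithm ``admits'' such a solution).

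Next I would expand $D_t\,Q_\Omega(\bu_{NN}\cdot\bm{\omega}_{NN})$ by the discrete Leibniz rule of the Lemma, writing it as $Q_\Omega(D_t\bu_{NN}\cdot\bm{\omega}_{NN}) + Q_\Omega(\bu_{NN}\cdot D_t\bm{\omega}_{NN})$, and then show the two pieces coincide so that the expression equals $2\,Q_\Omega(D_t\bu_{NN}\cdot\bm{\omega}_{NN})$. This reduction uses the constraint: from $\bm{\omega}_{NN}=\nabla\times\bu_{NN}$ we get $D_t\bm{\omega}_{NN}=\nabla\times D_t\bu_{NN}$, and integrating by parts to move the curl onto $\bu_{NN}$ turns $\bu_{NN}\cdot D_t\bm{\omega}_{NN}$ into $(\nabla\times\bu_{NN})\cdot D_t\bu_{NN}=\bm{\omega}_{NN}\cdot D_t\bu_{NN}$, with the boundary contribution killed by \eqref{NS-bc}.

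I would then substitute the momentum equation to eliminate $D_t\bu_{NN}$ in $Q_\Omega(D_t\bu_{NN}\cdot\bm{\omega}_{NN})$. The advection term vanishes pointwise because $(\bu_{NN}\times\bm{\omega}_{NN})\cdot\bm{\omega}_{NN}=0$, leaving only the pressure and diffusion contributions. For the diffusion term a single integration by parts rewrites $(\nabla\times\nabla\times\bu_{NN})\cdot\bm{\omega}_{NN}$ as $\nabla\times\bu_{NN}\cdot\nabla\times\bm{\omega}_{NN}=\nabla\times\bu_{NN}\cdot\nabla\times\nabla\times\bu_{NN}$, the boundary term again dropping by \eqref{NS-bc}; collecting the factor $2$ and applying $Q_\Omega$ reproduces the right-hand side of the claimed identity.

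I expect the only delicate point to be conceptual rather than computational: the factor-of-two reduction $D_t\,Q_\Omega(\bu_{NN}\cdot\bm{\omega}_{NN})=2\,Q_\Omega(D_t\bu_{NN}\cdot\bm{\omega}_{NN})$. In Algorithm \ref{PINN_algorithm} it is automatic because $\bm{\omega}_{NN}$ is literally computed as $\nabla\times\bu_{NN}$, whereas here it holds only by virtue of the enforced constraint $\mathcal{L}_{\bm{\omega}}=0$; the whole argument therefore rests on granting $\bm{\omega}_{NN}=\nabla\times\bu_{NN}$ exactly at the quadrature nodes. Once that is granted, the integration-by-parts bookkeeping and the vanishing of boundary terms under \eqref{NS-bc} are identical to the previous proof and present no further difficulty.
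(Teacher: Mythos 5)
Your proposal follows essentially the same route as the paper's own proof: expand $D_t Q_\Omega(\bu_{NN}\cdot\bm{\omega}_{NN})$ by the discrete Leibniz rule, reduce to $2Q_\Omega(D_t\bu_{NN}\cdot\bm{\omega}_{NN})$, substitute the strong-form momentum equation, kill the advection term pointwise via $(\bu_{NN}\times\bm{\omega}_{NN})\cdot\bm{\omega}_{NN}=0$, and integrate the diffusion term by parts under \eqref{NS-bc}. Your explicit flagging of the factor-of-two step as resting on the exact constraint $\bm{\omega}_{NN}=\nabla\times\bu_{NN}$ is a welcome clarification that the paper only addresses in the remarks following its proof, but it does not constitute a different argument.
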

\begin{proof} 
Similarly to the Algorithm \ref{PINN_algorithm}, we can have the following discrete identity since neural network functions and their derivatives can represent zero: 
\begin{equation}\label{dmfull} 
D_t \bu - ( \bu \times [\nabla \times \bu] ) + \nabla p  + R_{e}^{-1}\nabla \times\nabla\times \bm{u} = \bm{0}.  
\end{equation}
We note that we can define the fluid helicity using either the total integral of $\bm{u} \cdot \bm{\omega}$ as follows: 
\begin{eqnarray*} 
D_t Q_{ {\Omega}} (\bu \cdot \bm{\omega}) &=& Q_{ {\Omega}} (D_t \bu \cdot \bm{\omega}) + Q_{ {\Omega}} (\bu \cdot D_t \bm{\omega}) = 2 Q_{ {\Omega}} (D_t \bu \cdot \bm{\omega}) \\
&=& 2 \left [ Q_{{\Omega}} \left( [ \bu \times \bm{\omega} ] \cdot \bm{\omega} \right)  + Q_{{\Omega}} \left ( \nabla p \cdot \bm{\omega} \right ) - R_{e}^{-1} Q_{{\Omega}} \left( \nabla\times \bm{u}\cdot \nabla\times \bm{\omega} \right ) \right ] \\
&=& -2R_{e}^{-1} Q_{{\Omega}} (\nabla\times \bm{u}\cdot \nabla\times \bm{\omega})+ 2Q_{\Omega} \left ( \nabla p \cdot \bm{\omega} \right ) .
\end{eqnarray*} 
This implies the desired result. 
\end{proof} 
The difference in the helicity expression from Algorithm \ref{PINN_algorithm} and Algorithm \ref{omega_network} can be found at the term 
\begin{equation}
Q_{\Omega} \left ( \nabla p_{NN} \cdot \nabla \times \bm{u}_{NN} \right ) \quad \mbox{ and } \quad Q_{\Omega} \left ( \nabla p_{NN} \cdot \nabla \times \bm{w}_{NN} \right ). 
\end{equation} 
The crucial point we make is that even if we obtain the best possible optimizer in Algorithm \ref{omega_network}, it is in general impossible to make the loss function for the vorticity zero since $\bm{\omega}_{NN}$ can not represent $\nabla \times \bm{u}_{NN}$. In fact the best possible $\bm{\omega}_{NN}$ is given as follows: 
\begin{equation}
\bm{\omega}_{NN} = Q_{NN} (\nabla \times \bm{u}_{NN}), 
\end{equation} 
where $Q_{NN}$ is the $L^2$ projection onto the space of neural network functions chosen in the model. This in general can not guarantee the divergence free property of $\bm{\omega}_{NN}$ and thus the helicity conservation of Algorithm \ref{omega_network} may not hold in general. Therefore, we arrive at the following similar but different corollary. 
By taking numerical quadrature rule to define the fluid helicity, we obtain the desired result and this completes the proof.
\begin{corollary}
If the integral $Q_\Omega$ is done accurately and $R_{e} = \infty$, then the helicity is not in general conserved exactly for Algorithm \ref{omega_network}.  More precisely, we have that: 
\begin{eqnarray*} 
D_t \int_\Omega \left (\bu_{NN} \cdot \bm{w}_{NN} \right) \, dx = 2 \int_{\Omega} \nabla p_{NN} \cdot \bm{\omega}_{NN} \, dx. 
\end{eqnarray*} 
\end{corollary}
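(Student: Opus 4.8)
The plan is to read the result off the preceding theorem rather than to recompute anything. Starting from the helicity identity~\eqref{divpw-proof} for Algorithm~\ref{omega_network}, I would first set $R_e = \infty$, so that the factor $R_e^{-1}$ multiplying the diffusion quadrature $Q_\Omega(\nabla\times\bu_{NN}\cdot\nabla\times\nabla\times\bu_{NN})$ vanishes. I would then invoke the standing hypothesis that $Q_\Omega$ is exact to replace each quadrature by the corresponding integral, $Q_\Omega(\cdot)=\int_\Omega(\cdot)\,dx$. These two substitutions turn \eqref{divpw-proof} directly into the asserted identity $D_t\int_\Omega \bu_{NN}\cdot\bm{\omega}_{NN}\,dx = 2\int_\Omega \nabla p_{NN}\cdot\bm{\omega}_{NN}\,dx$, with no further manipulation needed.

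Next I would justify the phrase \emph{not in general conserved}, which is precisely what separates Algorithm~\ref{omega_network} from Algorithm~\ref{PINN_algorithm}. The same integration by parts used in the corollary for Algorithm~\ref{PINN_algorithm} gives $\int_\Omega \nabla p_{NN}\cdot\bm{\omega}_{NN}\,dx = -\int_\Omega p_{NN}\,\div\bm{\omega}_{NN}\,dx + \int_{\partial\Omega} p_{NN}\,\bm{\omega}_{NN}\cdot\bm{n}\,ds$, and the boundary term drops out because the pressure condition $p=0$ on $\partial\Omega$ from \eqref{NS-bc} is enforced through $\mathcal{L}_{\textsf{bdry}_{(\bm{u},p)}}$. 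In Algorithm~\ref{PINN_algorithm} one has $\bm{\omega}=\nabla\times\bu_{NN}$, so $\div\bm{\omega}=0$ pointwise and the bulk term also vanishes, yielding exact conservation; here I would emphasize that this last cancellation is no longer available.

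The crux, and the only genuinely nontrivial step, is to argue that $\div\bm{\omega}_{NN}\neq 0$ in this setting, so that $\int_\Omega p_{NN}\,\div\bm{\omega}_{NN}\,dx$ truly survives. The point is structural: in Algorithm~\ref{omega_network} the field $\bm{\omega}_{NN}$ is an independent network output coupled to $\bu_{NN}$ only weakly through the penalty $\mathcal{L}_{\bm{\omega}}$, rather than by the pointwise relation $\bm{\omega}_{NN}=\nabla\times\bu_{NN}$. As already noted in the text, the best attainable optimizer produces $\bm{\omega}_{NN}=Q_{NN}(\nabla\times\bu_{NN})$, the $L^2$ projection of $\nabla\times\bu_{NN}$ onto the chosen trial space, and this projection does not commute with $\div$; hence $\div\bm{\omega}_{NN}\neq 0$ even though $\div(\nabla\times\bu_{NN})=0$ identically. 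I expect the main obstacle to be making \emph{in general} precise, i.e. arguing that the surviving term does not cancel accidentally, since the identity itself is immediate. The cleanest resolution is to keep $2\int_\Omega \nabla p_{NN}\cdot\bm{\omega}_{NN}\,dx$ as the exact expression for the helicity drift and to let the numerical experiments certify that it is nonzero, consistent with the non-conservative behaviour reported for the scheme of Girault~\cite{girault1990curl}.
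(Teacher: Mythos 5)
Your proposal is correct and follows essentially the same route as the paper: specialize the theorem's identity \eqref{divpw-proof} by setting $R_e=\infty$ and replacing the exact quadrature $Q_\Omega$ by the integral, then observe that the residual term $2\int_\Omega \nabla p_{NN}\cdot\bm{\omega}_{NN}\,dx$ cannot be eliminated because $\bm{\omega}_{NN}$, being only the $L^2$-type projection $Q_{NN}(\nabla\times\bm{u}_{NN})$ rather than an exact curl, need not be divergence-free. The only difference is that you spell out the integration-by-parts comparison with Algorithm \ref{PINN_algorithm} more explicitly than the paper does, which is a welcome clarification rather than a departure.
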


\section{Numerical Experiments}\label{sec:num} 

We now report a couple of numerical tests. One result is the error analysis with analytic solutions. The other is to show that the proposed scheme can conserve the helicity. The numerical experiments are performed on a workstation with 1 10 Core Intel(R) Xeon(R) Silver 4210R CPU, 1 RTX A5000 GPU, 128GB RAM, and a Ubuntu 20.04 operating system that implements Pytorch. Note that we choose Adam as the optimizer and the learning rate is different depending on the loss term \eqref{loss}. In this case, the training process will guarantee both convergence and speed. We use $tanh$ as the activation function. In series of experiments, we choose $h=1/16$ and $\Delta t=1E-3$ when calculating Gaussian quadrature in order to guarantee the fairness of evaluation criterion.

\subsection{Error analysis with analytic solution for $up_{NN}$ Network} 
In this section, we carry out a 3D error test with the following form of solutions on the domain $\Omega = [0,1]^3$. First we generate analytic solutions of the equation. starting at 
\begin{equation}
\widetilde{p} = h(x) h(y) h(z),  
\end{equation} 
where $h(\mu) = (\mu^2 - \mu)^2$. Further, we let  
\begin{equation}
g_1(t) = 4 - 2t, \quad g_2(t) = 1+t \quad \mbox{ and } \quad g_3(t) = 1-t. 
\end{equation}
We now introduce analytic velocity that satisfy the boundary conditions. Namely, 
\begin{eqnarray*}
\bm{u} = \left( \begin{array}{c} -g_1 h'(x) h(y) h(z) 
\\ - g_2 h(x) h'(y) h(z)   \\  -g_3 h(x) h(y) h'(z) \end{array} \right ).
\end{eqnarray*}
With this setting, $\bm{u} \times \bm{n} = 0$, and the modified pressure $p = |\bm{u}|^2/2 + \widetilde{p}$ satisfies the boundary condition.    

We report the numerical error analysis obtained by solving Navier-Stokes equation obtained by setting $R_e = 10^4$ for the model problem. The interval of the mesh $h=1/16$ and the whole time domain $T=1s$. The time domain of each sequence is $0.01s$ and the time interval of each sequence $\Delta t= 1E-3s$. The $L_2$ error is shown in Table \ref{standard_error}.

\begin{table}[H]
\begin{center}
{{\begin{tabular}{|c|c|c|}
\cline{1-3}
 $\|\bm{u} - \bm{u_h}\|_{L_2}$ & $\| {\bm{\omega}} - {\bm{\omega}}_h\|_{L_2}$ &  $\|p - p_h\|_{L_2}$
\\ \cline{1-3} 
 5.029E-04  &  7.584E-04  &    2.369E-04  \\ \cline{1-3} 
\end{tabular}
\caption{Numerical error analysis for solving Navier-Stokes equation. The time level at which the error is computed is $T = 1$ and $\Delta t = 1E-03$. $R_e = 10^4$.}
\label{standard_error} 
}}\end{center}
\end{table}

In order to show the effect of different space intervals and time intervals, two groups of comparison tests have been done. First, we choose different space interval $h$ to observe the performance of our model. Table \ref{error_h} shows the result of different $h$.

\begin{table}[H]
\begin{center}
{{\begin{tabular}{|c|c|c|c|}
\cline{1-4}
$h$  & $\|\bm{u} - \bm{u_h}\|_{L_2}$ & $\| {\bm{\omega}} - {\bm{\omega}}_h\|_{L_2}$ &  $\|p - p_h\|_{L_2}$
\\ \cline{1-4} 
$2^{-2}$ & 8.315E-03  &  4.264E-02  &    1.564E-04 
\\ \cline{1-4} 
$2^{-3}$ & 1.621E-03  &  2.331E-03  &    3.996E-04 
\\ \cline{1-4} 
$2^{-4}$ & 5.029E-04  &  7.584E-04  &    2.369E-04 
\\ \cline{1-4} 
\end{tabular}
\caption{Error analysis of different $h$ while $\Delta t=1E-03$. The time level at which the error is computed is $T = 1$. $R_e = 10^4$.}
\label{error_h} 
}}\end{center}
\end{table}

Then we choose different time interval $\Delta t$ to observe the performance of our model. Table \ref{error_dt} shows the result of different $\Delta t$. We can conclude that the space interval has more influence to the error than the time interval.

\begin{table}[H]
\begin{center}
{{\begin{tabular}{|c|c|c|c|}
\cline{1-4}
$\Delta t$  & $\|\bm{u} - \bm{u_h}\|_{L_2}$ & $\| {\bm{\omega}} - {\bm{\omega}}_h\|_{L_2}$ &  $\|p - p_h\|_{L_2}$
\\ \cline{1-4} 
2E-3 & 5.618E-04  &  7.604E-04  &    1.334e-04 
\\ \cline{1-4} 
1E-3 & 5.029E-04  &  7.584E-04  &    2.369E-04 
\\ \cline{1-4} 
5E-4 & 5.556E-04  &  7.597E-04  &    5.900E-06  
\\ \cline{1-4} 
\end{tabular}
\caption{Error analysis of different $\Delta t$ while $h=1/16$. The time level at which the error is computed is $T = 1$. $R_e = 10^4$.}
\label{error_dt} 
}}\end{center}
\end{table}

\subsection{Tests for Helicity conservation for $up_{NN}$ Network and $\omega_{NN}$ Network} 

In this section, we test our algorithm for conservation of divergence ($Div$), energy ($E_f$) and fluid helicity ($\mathcal{H}_f^{up_{NN}}$ and $\mathcal{H}_f^{w_{NN}}$). Our initial condition is given for $\bm{u}$, $\bm{\omega}$ and $p$ as follows: 

\begin{equation}
	\begin{aligned}
		u_1 &= -\sin(\pi(x-0.5))\cos(\pi(y-0.5))z(z-1), \\
		u_2 &= \cos(\pi(x-0.5))\sin(\pi(y-0.5))z(z-1), \\
		u_3 &= 0, \\
%		\omega_1 = 0, 
%		\omega_2 = 0, 
%		\omega_3 = 0, 
		p &= 0.
	\end{aligned}
	\label{init-cond}
\end{equation}
As depicted in Figure \ref{init} below, we note that the desired boundary conditions are satisfied:  
\begin{equation} 
\bm{u} \times \bm{n} = 0, \mbox{ on } \partial \Omega. 
\end{equation} 

\begin{figure}[H]
\centering
   \includegraphics[width=6cm, height=6cm]{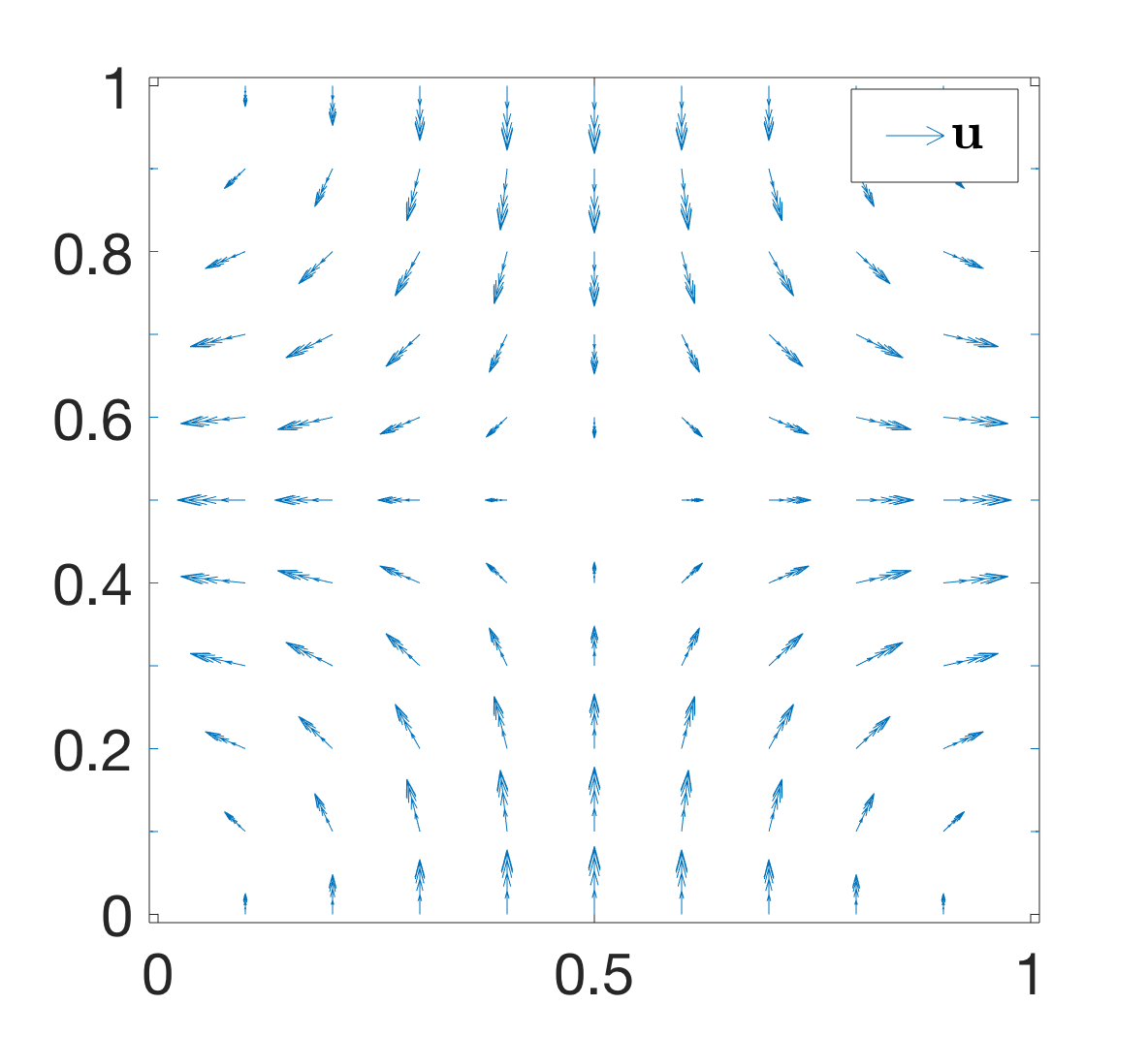}
   \caption{Top view, or projection (onto $xy$ plane) of initial $\bm{u}$}
   \label{init}
\end{figure}

For the pressure $p = \widetilde{p} + \frac{1}{2}|\bm{u}|^2$, we impose the zero boundary condition. Besides, $\bm{f=0}$ in this test. Under this setting, we first show that the divergence conserves well. The divergence is defined as the maximum divergence at certain times, which is shown as Figure \ref{divergence}. The result presents that our model satisfies the divergence-free condition well.

\begin{figure}[H]
\centering
   \includegraphics[width=1\textwidth]{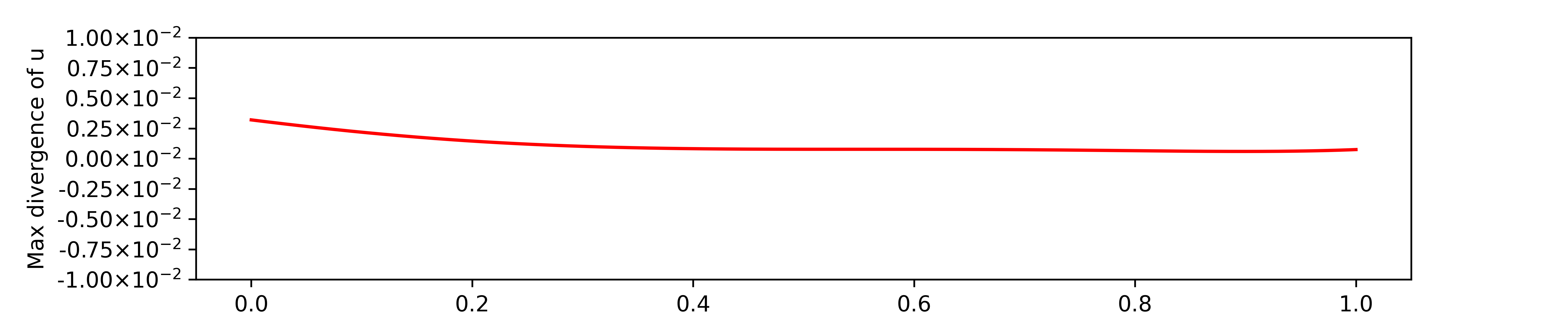}
   \caption{Maximum divergence of $\bm{u}$ for $T=1s$ for $up_{NN}$ Network}
   \label{divergence}
\end{figure}

More importantly, we can show that the energy conserves when $R_e$ approaches $\infty$, which means: 
\begin{eqnarray*}
	\|\bm{u} \|_{L_2} = \sqrt{\int_{\Omega}\left|\bm{u}\right|^2{\rm{d}} x} = 0, \quad \forall t \in [0,\infty).
\end{eqnarray*} 

Energy conservation experiment is shown as Figure \ref{energy}. The figure indicates that $\|\bm{u} \|_{L_2}$ is numerically 0.
\begin{figure}[H]
\centering
   \includegraphics[width=1\textwidth]{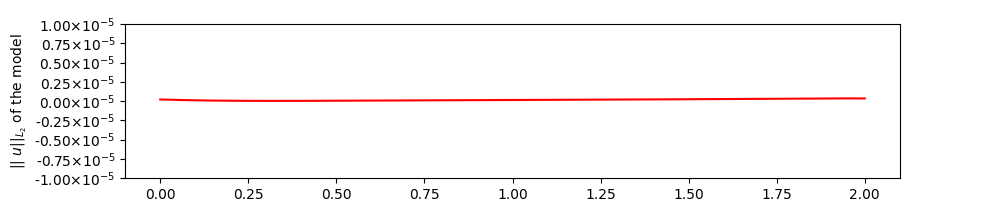}
   \caption{Energy conserves when $T=1s$ for $up_{NN}$ Network}
   \label{energy}
\end{figure}

Additionally, the helicity also conserves:
\begin{eqnarray*}
\mathcal{H}^{up_{NN}}_f = \int_{\Omega} \bm{u}_{NN} \cdot \nabla \times \bm{u}_{NN} \ {\rm{d}} x = 0, \quad \forall t \in [0,\infty).  
\end{eqnarray*}

We call Algorithm \ref{PINN_algorithm} as $up_{NN}$ Network and Algorithm \ref{omega_network} as $\omega_{NN}$ Network. In \ref{divw-proof}, ${\rm div} \bm{\omega}$ is the most significant term for making the formula zero. Divergence of $\omega$ shows as Figure \ref{div_w_png}. This verifies that the divergence of $\omega$ is the main reason fails to conserve the helicity for $\omega_{NN}$ Network.

\begin{figure}[H]
\centering
   \includegraphics[width=1\textwidth]{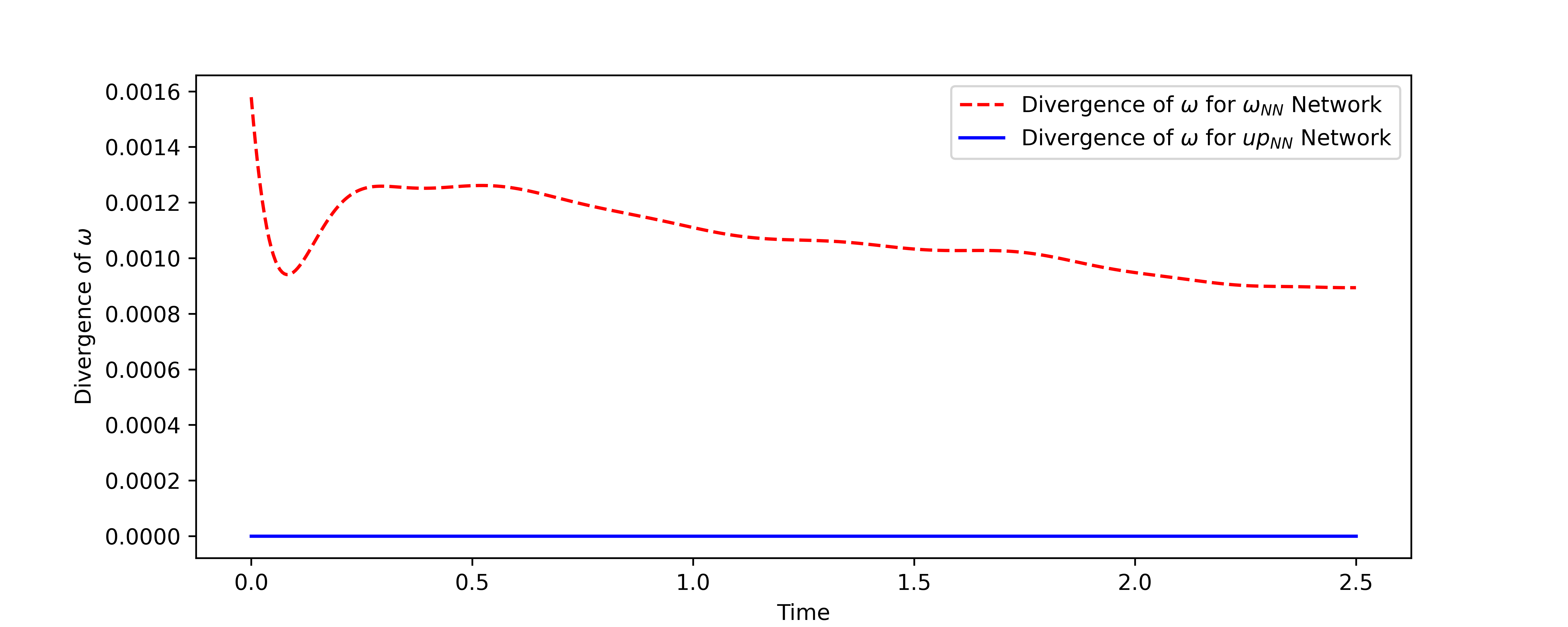}
   \caption{Maximum divergence of $\omega$ for $T=2.5s$}
   \label{div_w_png}
\end{figure}

Further more, \ref{divpw-proof} presents that $\int \nabla p \cdot \bm{\omega}$ is the key that $\omega_{NN}$ Network fails to preserve the helicity. It is shown as Figure \ref{divpw}, which indicates that the $\int \nabla p \cdot \bm{\omega}$ of $up_{NN}$ Network conserves well but $\omega_{NN}$ Network is not.

\begin{figure}[H]
\centering
   \includegraphics[width=1\textwidth]{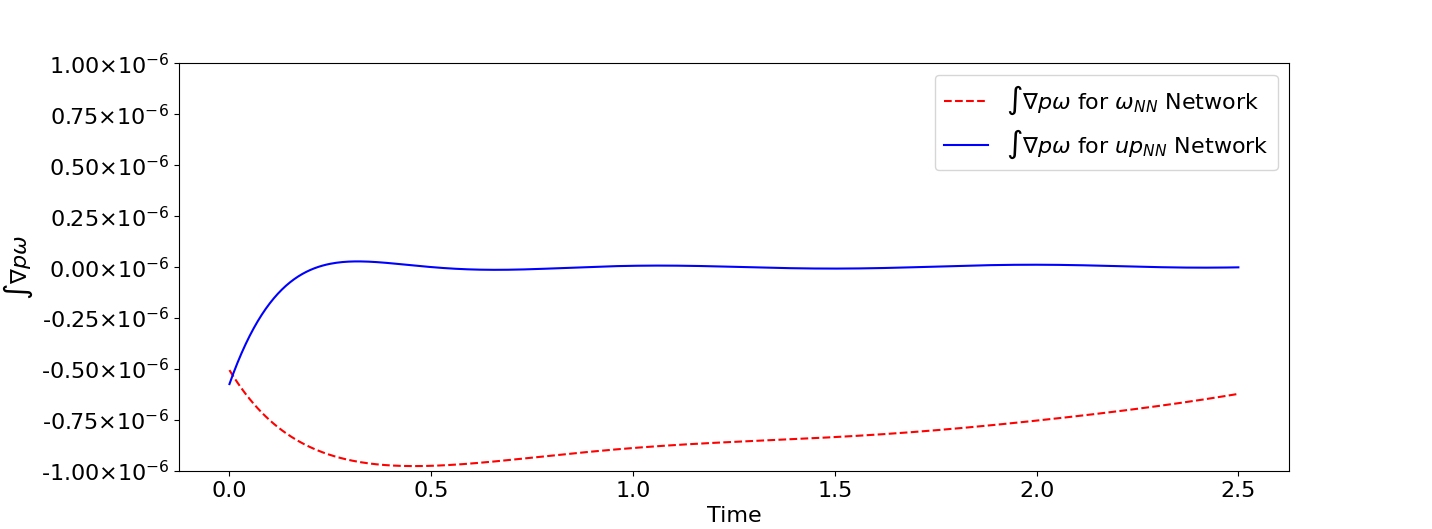}
   \caption{$\int \nabla p \cdot \bm{\omega}$ for $T=2.8s$}
   \label{divpw}
\end{figure}
Experiments show that our scheme preserves the helicity orders of magnitude better with a simple modification in the definition of the vorticity. Figure \ref{helicity} show that the helicity of our model conserves better. Besides, when the interval of mesh becomes larger, the helicity fails to conserve as well as the refining one.
\begin{figure}[H]
\centering
   \includegraphics[width=1\textwidth]{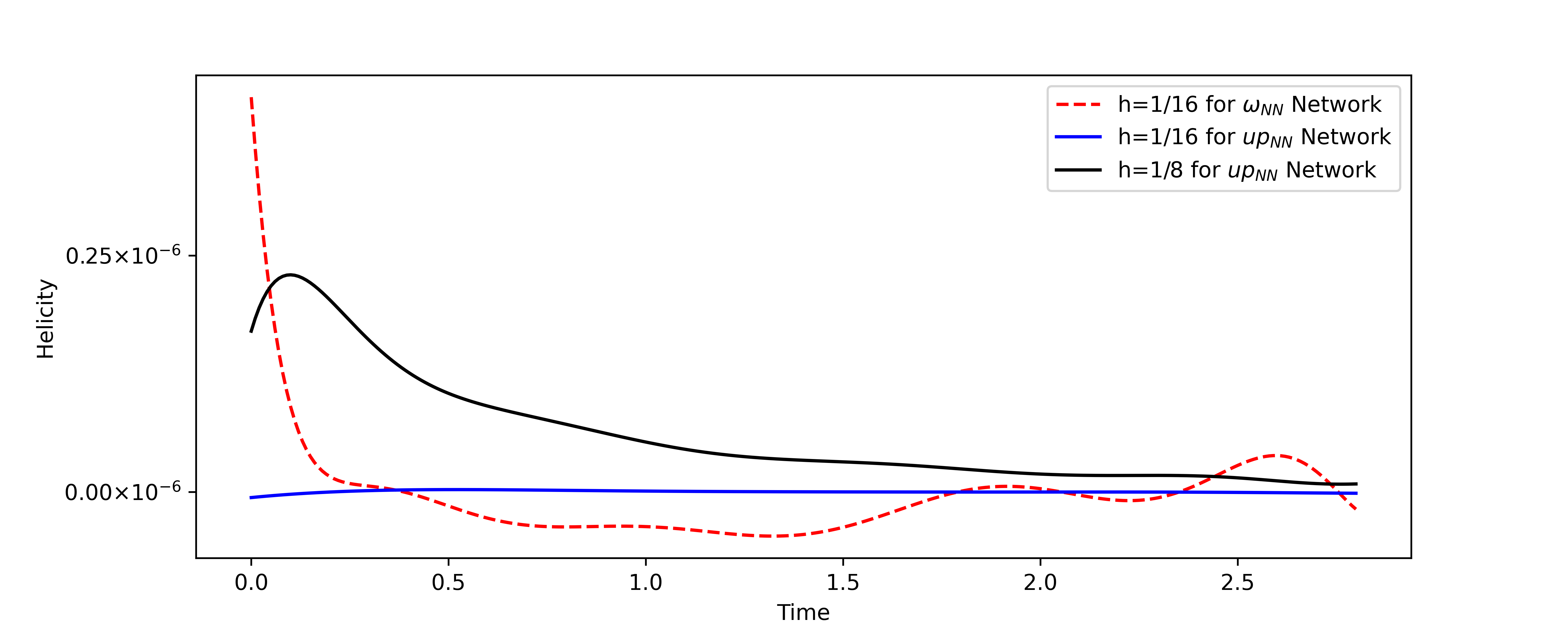}
   \caption{Helicity conservation when $T=2.5s$}
   \label{helicity}
\end{figure}

%\begin{figure}[h]
%\centering 
%\includegraphics[width=12cm, height=6cm]{}
%\caption{$\mathcal{H}_f$ for Scheme 0 and Scheme 1 at $R_e = 10^4$ %with $h = 1/16$ and $\Delta t = 1/1000$.} 
%\end{figure}

%\begin{figure}[h]
%\centering 
%\includegraphics[width=12cm, height=6cm]{}
%\caption{$\mathcal{H}_f$ for Scheme 0 and Scheme 1 at $R_e = 10^5$ %with $h = 1/16$ and $\Delta t = 1/1000$.} 
%\end{figure}

Figure \ref{hahahihi} shows that the finite element scheme mentioned in \cite{girault1990curl} does not preserve helicity. But our finite element method introduced in section \ref{Finite element} conserves the helicity.

\begin{figure}[h]
\centering 
\includegraphics[width=1\textwidth]{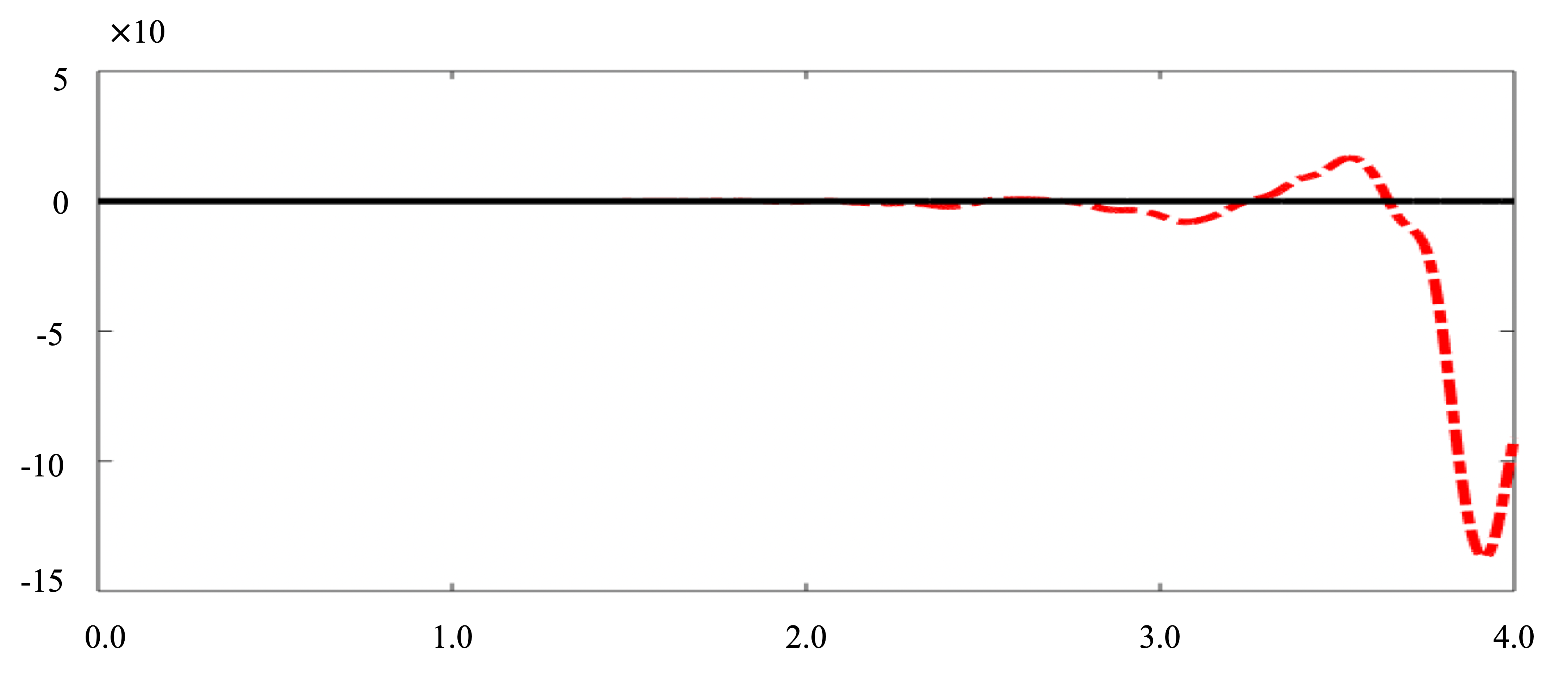}
\caption{$\mathcal{H}_f$ for Girault method(red, dot line) and our finite element method (black, solid line) at $R_e = 10^6$ with $h = 1/16$ and $\Delta t = 1E-03$.}\label{hahahihi} 
\end{figure}

 %By further modifying the formulation so as to impose the condition $\bm{n}\times \bm{u}_h = 0$ on $\partial \Omega$, weakly, we have formulated the one from Hughes et al. \cite{hughes1987new} in a slightly different manner. Let $\mathcal{V}_h$ be the space obtained from $\mathcal{V}_{h,0}$ by eliminating the condition $\bm{n}\times\bm{u}_h = 0$ on $\partial \Omega$. Our test formulation is to find $(\bm{u}_h, p_h) \in \mathcal{V}_h \times Q_h$ such that for all $(\bm{v}_h, q_h) \in \mathcal{V}_h \times Q_h$, 
%\begin{subeqnarray*}
%&& (\partial_t \bm{u}_h, \bm{v}_h) - (\bm{u}_h \times \nabla \times \bm{u}_h, \bm{v}_h) +  
%R_e^{-1} \left [ (\nabla \times \bm{u}_h, \nabla \times \bm{v}_h) + ( \nabla \cdot \bm{u}_h, \nabla \cdot \bm{v}_h) \right ] - (p_h, \nabla \cdot \bm{v}_h) \\
%&& \qquad \qquad - R_e^{-1} \langle \bm{v}_h, \bm{n} \times \nabla \times \bm{u}_h\rangle - R_e^{-1} \langle \bm{u}%_h, \bm{n} \times \nabla \times \bm{v}_h\rangle 
%+ \beta \langle \bm{u}_h \times \bm{n}, \bm{v}_h \times \bm{n} \rangle = 0, \\
%&& - (\nabla \cdot \bm{u}_h, q_h) = 0.
%\end{subeqnarray*}

%Note that the Hood-Taylor element has been chosen for the computation. For the fair comparisons, we provide the pressure boundary condition $p = 0$ on $\partial \Omega$. The time discretization is applied with the Crank-Nicolson time discretization.  

\section{Conclusion}\label{sec:con} 

Neural network is popular and it has a lot of potential. In this paper, we provide a first attempt to use neural network function to preserve helicity of Navier-Stokes equation. Our observation is that PINN model is based on the strong form of PDE, it is easier to demonstrate the conservation property unlike the weak form of PDE. %In the forthcoming paper, we shall solve MHD equation that preserves the magnetic and cross helicities. 

\section*{Acknowledgement}

Authors thank Professor Jinchao Xu and Dr. Kaibo Hu for helpful discussion. This work is supported by Graduate Innovation Fund of Jilin University (2023CX183).

%%%%%%%%%%%%%%%%%
\bibliographystyle{plain}      % mathematics and physical sciences
\bibliography{helicity}{}   % name your BibTeX data base

\end{document}